\theoremstyle{plain}
\newtheorem{cor}{Corollary}[section]
\newtheorem{prop}{Proposition}[section]
\theoremstyle{definition}
\theoremstyle{remark}
\begin{document}
	\rightline{\baselineskip16pt\rm\vbox to20pt{
			{
				\hbox{OCU-PHYS-480}
				\hbox{AP-GR-145}
			}
			\vss}}%

	\bigskip
	\begin{center}
		{\LARGE\bf Black holes submerged in Anti-de Sitter space}\\
		\bigskip\bigskip
		{\large 
			Hideki ishihara\footnote{ishihara@sci.osaka-cu.ac.jp}, 
			Satsuki Matsuno\footnote{smatsuno@sci.osaka-cu.ac.jp}, and 
			Haruki Nakamura
		}\\
		\bigskip
		{\it Department of Physics, Graduate School of Science,
			Osaka City University\\
			3-3-138 Sugimoto, Sumiyoshi, Osaka 558-8585, Japan}
	\end{center}
	
	\begin{center}
		\today
	\end{center}
	
	\vspace{3cm}
	
	\begin{abstract}
		Suppose a one-dimensional isometry group acts on a space, we can consider a submergion 
		induced by the isometry, namely we obtain an orbit space by identification of points 
		on the orbit of the group action. We study the causal structure of the orbit space for 
		Anti-de Sitter space (AdS) explicitely. 
		In the case of AdS$_3$, we found a variety of black hole structure, 
		and in the case of AdS$_5$, we found a static four-dimensional black hole, 
		and a spacetime which has two-dimensional black hole as a submanifold.
	\end{abstract}

	\newpage
	
	\tableofcontents
	
	\newpage
	
	\section{Introduction}
	One considers that the space with rich geometrical symmetry is simple, 
	while the less symmetric space is complicate. 
	The maximally symmetric space is the most simple space in this sense. 
	In fact, our real universe has very complicate geometrical structure so that the universe, 
	where black holes exist and gravitational waves propagate, has interesting physical properties. 
	We discuss the possibility, in this paper, 
	that even though the space is simple, namely, it admits a rich isometry group, the space can involve 
	non-trivial geometrical structures from a viewpoint of decomposition of the space with respect 
	to the isometry.
	
	If an isometry group generated by a Killing vector acts on a space then one can decompose the space 
	into the orbit of the group action and the transverse space to the orbit. 
	Suppose that an $m$-dimensional Lie group G acts on an $(m+n)$-dimensional space $\tilde{M}$ transitively. 
	We obtain an $n$-dimensional manifold $M=\tilde{M}/G$, called the orbit space, by identifying 
	the points on a orbit of the action of G. 
	This procedure, $\pi:\tilde{M}\rightarrow M/G$, is called the \lq{\sl submersion}\rq\ 
	and it induces a fiber bundle structure, 
	where the fiber is a orbit of the group action and the base space is the orbit space.
	The submersion $\pi$ also defines the Riemannian submersion $\pi:\ (\tilde{M},\tilde{g})\rightarrow (M,g)$. 
	A tangent space $T_p\tilde{M}$ of $\tilde{M}$ is decomposed into the direct sum 
	\begin{align}
	T_p\tilde{M}=V_p\oplus H_p
	\end{align}
	where $V_p:=\ker\pi_\ast|_p$ is called the vertical subspace, 
	and $H_p$, called horizontal subspace, is the orthogonal complement to $V_p$ with respect to $\tilde{g}$. 
	The metric $g$ on $M$, called a projective metric, is obtained by restricting the metric $\tilde{g}$ on $\tilde M$  
	onto $H_p$.
	If the original space has a variety of isometries, depending on the choice of the isometry, 
	the orbit space can have complicated geometry even if the original space is simple. 
	
	Now, we suppose that $(\tilde{M},\tilde{g})$ is governed by the Einstein gravity.
	The Einstein gravity of $\tilde{M}$ is equivalent to the theory of gauge field with gauge group G 
	and adjoint scalar field of G, which interact with gravity in $n$-dimensional spacetime. 
	In the case of $\dim G=1$, as shown in the Appendix \ref{A brief review of the Kaluza-Klein theory 
		and the Einstein-Maxwell-Dilaton theory}, we obtain the Einstein-Maxwell-Dilaton theory that consists of 
	an abelian gauge field $A$ and an adjoint scalar field $\phi$ that interact with a gravitational field $g$. 
	Let $\xi$ be a Killing vector filed that generates an isometry group G. 
	The set of fields
	\begin{align}
	\hat{g}=|\xi|^\alpha g,\ A=\xi,\ e^\phi=|\xi|, 
	\label{decomposed_fields}
	\end{align}
	solves the Einstein-Maxwell-Dilaton theory. 
	If we chose the parameter $\alpha$ by
	\begin{align}
	\alpha=\frac{1}{n-2},
	\end{align}
	we have a set of the Einstein-Maxwell-Dilaton field in the Einstein frame. 
	Hereafter, we call this space $(M,\hat{g})$ the norm-weighted orbit space, or the base space, 
	simply in the term of fiber bundle. 
	
	In this paper, we consider the submersion by the isometry for the Anti-de Sitter space (AdS), 
	which gathers much attention recently in the context 
	of AdS/CFT\cite{Maldacena:1997re} or brane universe models\cite{Maartens:2010ar} etc. 
	Since AdS is maximally symmetric space, there exists a lot of Killing vectors which are not equivalent 
	geometrically as classified in \cite{Banados:1992gq, Holst:1997tm, Koike:2008fs, Morisawa:2017lpj}.
	We expect that the norm-weighted orbit space have a variety of geometric structures. 
	In particular we study the causal structure of base spacetimes $(M,\hat{g})$ of AdS. 
	
	In our study, 
	we develop a linear algebraic method to analyse the causal structure of the base space $(M,\hat{g})$ of AdS. 
	Since AdS$_n$ can be embedded in the pseudo Euclidean space $E^{(2,n-1)}$, 
	we can regard the Killing vector field used for the submersion as a linear operator in $E^{(2,n-1)}$. 
	We describe a property of null geodesics of the base space, which play an essential role for studying the causal 
	structure, in term of the linear operator in $E^{(2,n-1)}$. 
	This method enable us to analyze the behavior of null geodesics easily. 
	
	In the AdS$_3$ case, we clarify all causal structure of two-dimensional base spacetime,  
	where the fiber is one-dimension. 
	We found several kinds of causal structure that describe black holes. 	
	In AdS$_5$ case, we found a black hole spacetime and a spacetime which is not a complete four-dimensional black hole 
	but contains a two-dimensional black hole as a subspace. 
	
	Banados, Teitelboim and Zanelli\cite{BTZ} found that a quotient space of AdS by its some 
	discrete isometry has the black hole structure, called BTZ black hole. 
	Though the local geometry of BTZ spacetime is the same as AdS, the global structure is the black hole. 
	After their work, many researchers have studied BTZ black holes in 
	detail\cite{Banados:1992gq, Holst:1997tm, Aminneborg:2008sa}.
	The spacetime considered in the present paper would be obtained by the continuous limit of the BTZ spacetime, 
	the period of identification tends to 0, so it is expected that causal structure of submerged spacetime 
	is closely related to the BTZ spacetimes. 
	Indeed some of the conformal diagrams in the both the BTZ spacetime and our spacetime are the same.  
	However, the dimension of obtained spacetime and local geometry are different each other. 
	
	The organization of the paper is as follows. 
	An embedding of AdS is given in section 2, and the classification of the Killing vectors is 
	reviewed in section 3. The method of analysis to clarify the causal structure used in the present 
	paper is shown in section 4. In section 5 and section 6, the concrete analysis 
	for the AdS$_3$ and AdS$_5$ cases are done, respectively. Section 7 is devoted to conclusion and discussion.
	
	\section{Anti-de Sitter embedded in a flat space}
	\label{section Causal structure of AdS}
	Anti-de Sitter space (AdS) is the negative constant curvature space that can be defined as a submanifold 
	in a flat space.
	Let $(y^1,\cdots,y^{p+q})$ be a coordinate system of $\mathbb{R}^{p+q}$ and we define the pseudo Euclidean 
	space $E^{(p,q)}$ equipped with $(p,q)$ type metric 
	\begin{align}
	ds^2=-(dy^1)^2-\cdots-(dy^p)^2+(dy^{p+1})^2+\cdots+(dy^{p+q})^2.
	\end{align}
	In $E^{(2,n-1)}$, $n$ dimensional anti-de Sitter space, AdS$_n$, is given as the submanifold:
	\begin{align}
	-(y^1)^2-(y^2)^2+(y^3)^2+\cdots+(y^{n+1})^2=-1.
	\label{embedded AdS}
	\end{align}
	
	In this paper, we use the global coordinate system of AdS,
	\begin{align}
	y^1&=\sqrt{r^2+1}\cos\tau,\\
	y^2&=\sqrt{r^2+1}\sin\tau,\\
	(y^3)^2&+\cdots+(y^{n+1})^2=r^2. 
	\end{align}
	Then, the induced metric on eq.\eqref{embedded AdS} is given by
	\begin{align}
	ds^2=-(r^2+1)d\tau^2+\frac{1}{r^2+1}dr^2+r^2d\Omega_{n-2}^2,
	\label{AdSglobalmet}
	\end{align}
	where $d\Omega_{n-2}^2$ is the standard metric on $(n-2)$-dimensional sphere.
	
	We consider the conformal diagram to clarify the causal structure of AdS.
	In \eqref{AdSglobalmet} using the coordinate transformation
	\begin{align}
	r=\tan\chi,
	\end{align}
	we obtain
	\begin{align}
	ds^2=\sec^2\chi(-d\tau^2+d\chi^2+\sin^2\chi d\Omega_{n-2}^2).
	\label{AdSconfmetric}
	\end{align}
	Then (the universal cover of) AdS$_n$ is conformal isomorphic to the static Einstein universe 
	(product manifold of sphere and time). 
	Hereafter, we regard AdS$_n$ as the universal cover of eq.\eqref{embedded AdS}. 
	
	For example the conformal diagram of AdS$_3$ is a product manifold of two dimensional half sphere $S^2_{>0}$ and 
	time $\mathbb{R}$. 	(See fig.\ref{AdS_3andLightcorn}.)
	
		\begin{figure}[H]
			\centering
			\includegraphics[keepaspectratio, scale=0.3]{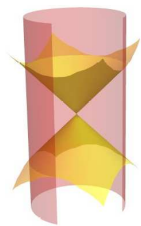}
			\caption{Conformal diagram of AdS$_3$ and light corn.
				AdS$_3$ is conformally embedded in the three dimensional static Einstein universe. 
				AdS$_3$ is the bulk region surrounded by the cylindrical surface.  
				The cylinder is the null infinity.}
			\label{AdS_3andLightcorn}
		\end{figure}
	
	 \section{Classification of Killing vectors of AdS}
	 \label{section Classification of Killing vector field of AdS}
	 AdS$_n$ admits the isometry group SO$(2,n-1)$.
	 Then, there exist $\frac{1}{2}n(n+1)$ Killing vector fields which are linearly independent each other.
	 However, some of them are \lq\lq equivalent" in the following geometrical sense.
	 
	 Let $\mathfrak{so}(2,n-1)$ be the Lie algebra of SO$(2,n-1)$.
	 We define equivalence relation $\sim$ in $\mathfrak{so}(2,n-1)$, i.e., 
	 for $\xi_1, \xi_2\in\mathfrak{so}(2,n-1)$, $\xi_1\sim\xi_2$ if and only if there exists 
	 the $g\in SO(2,n-1)$ such that $\xi_1=Ad(g)\xi_2$. 
	 Let $e^{t\xi}$ be a 1-parameter subgroup of $SO(2,n-1)$ generated by $\xi\in\mathfrak{so}(2,n-1)$.
	 By the action of $e^{t\xi}$ on AdS$_n$. we obtain curves 
	 $e^{t\xi}p,\ (p\in {\rm AdS}_n)$, then 
	 we regard $\xi\in\mathfrak{so}(2,n-1)$ as a Killing vector field tangent to the curves. 
	 Therefore the classification of $\mathfrak{so}(2,n-1)$ induces that of Killing vector fields. 	
	 It is obvious that if $\xi_1\sim\xi_2$, then two base spaces AdS$_n/e^{t\xi_1}$ and AdS$_n/e^{t\xi_2}$ are Riemannian isomorphic. 
	 Therefore, it is sufficient for the complete analysis that we consider base spaces with respect to 
	 the representaive of Killing vectors in this classification. 
	 
	 In the cases of AdS$_3$, AdS$_4$ and AdS$_5$, the Killing vector fields are classified 
	 explicitly\cite{Banados:1992gq, Holst:1997tm, Koike:2008fs, Morisawa:2017lpj}.
	 We regard AdS$_3$ as a submanifold $-t^2-s^2+x^2+y^2=-1$ in the pseudo Euclidean space $E^{(2,2)}$.
	 Then, the Killing vector fields of AdS$_3$ is obtained by restriction of the Killing vector fields of 
	 $E^{(2,2)}$ that generate the rotations and the Lorentz boosts on AdS$_3$. 
	 We can take a set of basis of Lie algebra as 
	 \begin{align}
	 K_{tx}&:=x\partial_t+t\partial_x,\quad
	 K_{ty}:=y\partial_t+t\partial_y,\quad
	 K_{sx}:=x\partial_s+s\partial_x,\quad\\
	 K_{sy}&:=y\partial_s+s\partial_y,\quad
	 L_{ts}:=-s\partial_t+t\partial_s,\quad
	 L_{xy}:=-y\partial_x+x\partial_y.
	 \end{align}
	 According to the notation in ref.\cite{Banados:1992gq}, 
	 the classes of Killing vector fields is shown in the following Table \ref{so(2,2)classifytable}.
	 \begin{table}[H]
	 	\caption{Classification of Killing vector fields in AdS$_3$.}
	 	\begin{center}
	 		\begin{tabular}{|c|c|c|} \hline
	 			Type & Killing vector & \\\hline
	 			$I_a$ & $b(L_{st}-L_{xy})+a(K_{ty}+K_{sx})$ & $ab\ne0,\ a\ne\pm b$  \\
	 			$I_b$& $aK_{tx}+bK_{sy}$ & $a\ne0$\\
	 			$I_c$& $aL_{st}+bL_{xy}$ & $a^2+b^2\ne0$  \\
	 			$II_a$& $a(K_{tx}+K_{sy})+L_{ts}-K_{ty}-K_{sx}-L_{xy}$ & $a\ne 0$ \\
	 			$II_b$& $a(L_{ts}-L_{xy})-L_{ts}-L_{xy}+K_{tx}-K_{sy}$ &   \\
	 			$III^+$& $K_{tx}+L_{xy}$ &\\
	 			$III^-$& $L_{ts}+K_{tx}$ &  \\\hline
	 		\end{tabular}
	 	\end{center}
	 	\label{so(2,2)classifytable}
	 \end{table}	
	 The norm of the Killing vectors are given in the Table \ref{so(2,2)classifynormtable}. 
	 \begin{table}[H]
	 	\caption{Norm of Killing vector fields}
	 	\begin{center}
	 		\begin{tabular}{|c|c|} \hline
	 			Type  & Norm\\\hline
	 			$I_a$ & $a^2-b^2-4ab(tx-sy)$  \\
	 			$I_b$& $-a^2x^2+a^2t^2-b^2y^2+b^2s^2$ \\
	 			$I_c$& $-a^2-(a^2-b^2)(x^2+y^2)$  \\
	 			$II_a$& $-(ax+t-y)^2-(ay-s-x)^2+(at-s-x)^2+(as-t+y)^2$ \\
	 			$II_b$& $1-a^2$ \\
	 			$III^+$& $(t-y)^2$ \\
	 			$III^-$& $-(x-s)^2$ \\\hline
	 		\end{tabular}
	 	\end{center}
	 	\label{so(2,2)classifynormtable}
	 \end{table}
	 
	 We are interested in the possibility of black hole structure, then we omit the type $I_c,II_b,III^-$ 
	 in our study for the following reasons. 
	 In type $III^-$ norm-weighted spacetime has Euclidean metric since the norm of Killing vector is negative definite.
	 In type $II_b$ the norm of the Killing vector is constant so its base space dose not have singularity.
	 In type $I_c$, if $a^2>b^2$ then the norm is negative definite and if $a^2<b^2$ then there exists naked singularity. 
	 Therefore we clarify the causal structure of base spacetimes by Killing vectors of type 
	 $I_a,I_b,II_a,III^+$ of AdS$_3$.
	 
	 \section{Method of analysis for causality}\label{section Analysis Method of Causality of norm-weighted Orbit Spacetime}
	 
	 In this section we explain the method to clarify the causal structure of base space 
	 used in this paper.
	 Let $(\tilde{M},\tilde{g})$ be a Lorentzian manifold, and $\xi$ be a Killing vector field. 
	 And let $(M=\tilde{M}/\xi,\hat{g})$ be the base space with respect to $\xi$ and we assume that $|\xi|^2$ has zero points.
	 
	 In $(M,\hat{g})$, the region where the norm of $\xi$ vanishes, i.e., $|\xi|^2=0$, is curvature singularity 
	 since the conformal factor in eq.\eqref{decomposed_fields} vanishes. 
	 Even though we can not take an Einstein frame by eq.\eqref{einstein frame alpha} in the case of $n=2$, 
	 the theory can not be defined on the region $|\xi|^2=0$ in eq.\eqref{reduction action} since the integration 
	 measure vanishes there.
	 Therefore we also regard $|\xi|^2=0$ as a singularity in the $n=2$ case.
	 
	 We consider a region in $\tilde{M}$ where $\xi$ is spacelike so that $(M,\hat{g})$ becomes Lorentzian manifold.
	 We construct conformal diagrams so that we clarify the causal structure of base space. 
	 For this aim, we introduce the null polar coordinate of the base space.
	 
	 We construct the null polar coordinate in the case of AdS$_n$.
	 The global coordinate of AdS$_n$ is given by \eqref{AdSconfmetric}. 
	 In this coordinate a null geodesic which passes through a point $N(\tau_0)$ specified $(\tau,\chi)=(\tau_0,0)$ 
	 and the direction of the geodesic tangent is specified by $\Omega_0$ is given by
	 \begin{align}
	 \tau-\tau_0=\lambda,\quad \chi=\lambda,\quad \Omega=\Omega_0,\label{eq of null geodesic in AdS3}
	 \end{align}
	 where $\lambda$ is a parameter along the null geodesic.
	 
	 Here we note the horizontal lift used in a context of fiber bundle.
	 Let $\pi:\tilde{M}\rightarrow M=\tilde{M}/\xi$ be a Riemannian submersion, and $c$ be a curve passing through 
	 a point $p\in M$. 
	 For a point $q\in\pi^{-1}(p)$, there exists a curve $\tilde{c}$ passing through $q$ such that $\pi(\tilde{c})=c$.
	 This curve $\tilde{c}$ is called a lift of the curve $c$ passing through $q$. 
	 And the curve $\tilde{c}$ is called the horizontal lift of the curve $c$ passing through $q$ 
	 if $\tilde{c}$ lies in horizontal subspace. 
	 The horizontal lift $\tilde{c}$ is unique for $c$ and $q$. 
	 
	 Let $\pi:AdS_n\rightarrow AdS_n/\xi$ be a Riemannian submersion with respect to the Killing vector $\xi$.
	 The horizontal lift of a null geodesic which passes through a point $\pi(N(\tau_0))$ in AdS$_n/\xi$ is 
	 a null geodesic which passed through the point $N(\tau_0)$ and is orthogonal to $\xi$. 
	 We call it horizontal null geodesic whose direction is $\Omega^\parallel_0$.
	 The point on the horizontal null geodesic is labeled by a set of parameters $(\tau_0,\lambda,\Omega^\parallel_0)$.
	 
	 We consider the region of AdS$_n/\xi$ such that any point is connected to the timelike curve $\pi(N(\tau))$ 
	 by a null geodesic.
	 By the projection map $\pi$ we can regard the set $(\tau,\lambda,\Omega^\parallel_0)$ as the coordinate of 
	 the corresponding region of AdS$_n/\xi$. 
	 By using this coordinate, we specify the position of the singularity in AdS$_n/\xi$. 
	 For this purpose we investigate the norm of $\xi$ on horizontal null geodesics $c(\lambda)$ that pass 
	 through $N(\tau_0)$ of AdS$_n$ as 
	 \begin{align}
	 f(\lambda):=|\xi(c(\lambda))|^2 . 
	 \label{definition of f}
	 \end{align}
	 If $f(\lambda)$ has a zero point at a finite $\lambda_0$, the curve $c(\lambda)$ intersects with the surface 
	 $|\xi|^2=0$ there, namely, the curve $\pi(c(\lambda_0))$ hits the singularity. 
	 
	 \subsection{Intersections of null geodesics and singularity}
	 \label{subsection The intersection of null geodesics and singularity}
	 In this subsection we will give the necessary and sufficient condition for a null geodesic intersects 
	 with the singularity, $|\xi|^2=0$. 
	 The following proposition concerning to geodesics is known for general constant curvature space.
	 \begin{prop}
	 	A geodesic passes through a point $y_0$ of the submanifold
	 	\begin{align}
	 	M:\ \eta_{\mu\nu}y^\mu y^\nu=\varepsilon a^2\ (\varepsilon=\pm1),
	 	\end{align}
	 	in the pseudo Euclidean space $E^{(p,q)}$
	 	is given by an intersection curve of M and two dimensional plane in $E^{(p,q)}$ that contains 
	 	the normal vector on the point $y_0$.
	 	Especially, if M is Lorentzian manifold a null geodesic is a null line in $E^{(p,q)}$.
	 	\label{geodesic of AdS}
	 \end{prop}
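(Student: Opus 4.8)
The plan is to exploit the standard characterization of geodesics of a submanifold: a curve $y(\lambda)$ lying in $M$ is a geodesic of the induced metric if and only if its ambient acceleration $\ddot y$ (the ordinary second derivative in the flat space $E^{(p,q)}$) has vanishing tangential part, i.e.\ $\ddot y$ is purely normal to $M$. First I would identify that normal direction. Writing $F(y)=\eta_{\mu\nu}y^\mu y^\nu$, one has $\partial_\mu F=2\eta_{\mu\nu}y^\nu$, so the $\eta$-gradient of $F$ is proportional to the position vector $y^\mu$ itself. Hence the geodesic equation reduces to
\begin{align}
\ddot y^\mu=\Lambda(\lambda)\,y^\mu
\end{align}
for some scalar function $\Lambda(\lambda)$.

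The key step is to show that $\Lambda$ is in fact constant along the geodesic. I would differentiate the constraint $\eta_{\mu\nu}y^\mu y^\nu=\varepsilon a^2$ once to get the tangency relation $\eta_{\mu\nu}\dot y^\mu y^\nu=0$, and once more to obtain $\eta_{\mu\nu}\ddot y^\mu y^\nu+\eta_{\mu\nu}\dot y^\mu\dot y^\nu=0$. Substituting the geodesic equation into the first term yields $\Lambda\,\varepsilon a^2+|\dot y|^2=0$. Moreover the geodesic equation together with tangency gives $\tfrac{d}{d\lambda}|\dot y|^2=2\Lambda\,\eta(y,\dot y)=0$, so the speed $|\dot y|^2=\kappa$ is constant; this forces $\Lambda=-\kappa/(\varepsilon a^2)$, a constant. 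The equation $\ddot y=\Lambda y$ is therefore a linear ODE with constant coefficients, whose solution with initial data $y(0)=y_0$, $\dot y(0)=v_0$ is a trigonometric, hyperbolic, or linear combination of $y_0$ and $v_0$ according to the sign of $\Lambda$. In every case the whole curve lies in the two-dimensional plane through the origin spanned by $\{y_0,v_0\}$. Since the normal vector at $y_0$ is exactly $y_0$, this plane contains the normal, and the geodesic is recovered as $M\cap\mathrm{span}\{y_0,v_0\}$, the claimed intersection curve. Because $\eta(y_0,y_0)=\varepsilon a^2\neq0$ while $\eta(y_0,v_0)=0$, the vectors $y_0,v_0$ are independent and the plane is genuinely two-dimensional; conversely each tangent direction yields a distinct such plane, so the correspondence is exhaustive.

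For the null statement I would simply set $\kappa=|\dot y|^2=0$. Then $\Lambda=0$, the geodesic equation becomes $\ddot y=0$, and the solution is the straight line $y(\lambda)=y_0+\lambda v_0$ with $v_0$ null. It remains to verify that this line stays on $M$: using tangency $\eta(y_0,v_0)=0$ and nullity $\eta(v_0,v_0)=0$ one computes $\eta(y_0+\lambda v_0,\,y_0+\lambda v_0)=\varepsilon a^2$ for all $\lambda$, so the null straight line lies entirely in $M$, as asserted.

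I expect the only delicate point to be the constancy of $\Lambda$; everything else follows directly from it, since once the coefficient is constant the solution is manifestly confined to $\mathrm{span}\{y_0,v_0\}$. The geodesic-as-normal-acceleration characterization and the identification of the normal with the position vector are standard and require no real work, and the null case is an immediate specialization where the confining plane degenerates into a single null line.
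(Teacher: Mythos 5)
Your proof is correct, but there is nothing on the paper's side to compare it with: the paper states this proposition without proof, introducing it as a fact ``known for general constant curvature space.'' Your argument supplies that missing proof, and it follows the standard route. The two key observations---that the ambient position vector $y^\mu$ is proportional to the normal of the pseudosphere, so an affinely parametrized geodesic obeys $\ddot y^\mu=\Lambda(\lambda)\,y^\mu$, and that differentiating the constraint $\eta(y,y)=\varepsilon a^2$ twice together with the constancy of $|\dot y|^2$ forces $\Lambda=-|\dot y|^2/(\varepsilon a^2)$ to be constant---are exactly what is needed; the resulting linear ODE confines the curve to $\mathrm{span}\{y_0,v_0\}$, which is the asserted two-plane containing the normal, and the null case $|\dot y|^2=0$ degenerates to $\ddot y=0$, i.e.\ a null straight line, which you correctly verify remains on $M$ using $\eta(y_0,v_0)=0$ and $\eta(v_0,v_0)=0$. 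One hypothesis deserves to be made explicit: the characterization ``geodesic of the induced metric if and only if the ambient acceleration is purely normal'' is valid only when the induced metric on the hypersurface is nondegenerate, equivalently when the normal is nowhere null; this holds here precisely because $\eta(y,y)=\varepsilon a^2\neq0$ on $M$. With that remark added, your proof is complete and rigorous.
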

	 In the following discussion we often identify $E^{(2,n-1)}$ and its tangent space $T_pE^{(2,n-1)}$ 
	 as a vector space $E^{(2,n-1)}$.
	 We regard AdS$_n$ is embedded in $E^{(2,n-1)}$, then 
	 an arbitrary point $N$ of $AdS$ is a point of $E^{(2,n-1)}$ then $N$ is timelike unit vector, $|N|^2=-1$.
	 Furthermore, $N$ is regarded as a normal vector of $AdS$ on the point $N$, 
	 a vector which is orthogonal to the vector $N$ is regarded as an element of $T_N(AdS)$ naturally. 
	 Let $X$ be a null vector which is orthogonal to the vector $N$ then $X$ is a null vector on AdS$_n$. 
	 By proposition\ref{geodesic of AdS}, a null geodesic which pass through the point $N$ of AdS$_n$ and 
	 has tangent vector $X$ is given by
	 \begin{align}
	 x(\lambda)=N+\lambda X.
	 \label{null equation}
	 \end{align}
	 
	 Moreover, let $\rho$ be a representation $\rho:SO(2,n-1)\rightarrow GL(E^{(2,n-1)})$, 
	 then a representation matrix of $\xi\in \mathfrak{so}(2,n-1)$ is $\rho_\ast(\xi)$. 
	 We write the same symbol $\xi$ for $\rho_\ast(\xi)$ for brevity. 
	 The matrix $\xi$ generates the vector field $\xi^\ast$ as 
	 \begin{align}
	 \xi^\ast(P):=\xi P,
	 \end{align}
	 where P is a point in $E^{(2,n-1)}$. 
	 Since $e^{t\xi}$ is a 1-parameter subgroup of the isometry $SO(2,n-1)$, 
	 for an arbitrary vectors $Z,W\in E^{(2,n-1)}$ we have
	 \begin{align}
	 \eta(e^{t\xi}Z,e^{t\xi}W)=\eta(Z,W),
	 \end{align}
	 where $\eta$ is the metric of $E^{(2,n-1)}$. 
	 Differentiating this equation with $t$ and putting $t=0$, we obtain
	 \begin{align}
	 \eta(\xi Z,W)+\eta(Z,\xi W)=0.
	 \label{linear killing equation}
	 \end{align}	
	 On a point $N\in AdS_n$, the orthogonal condition of a tangent vector $Z$ and 
	 the Killing vector $\xi^\ast$ is given by 
	 \begin{align}
	 \eta(X,\xi^\ast)_{N}=\eta(X,\xi N)=\eta(\xi X, N)=0,
	 \end{align}
	 where we regard $X,\xi N$ and $\xi X$ as vectors of $E^{(2,n-2)}$.
	 
	 The next proposition and its corollary are useful in this paper.
	 \begin{prop}
	 	Let N be an arbitrary point of $AdS_n(\subset E^{(2,n-1)})$.
	 	Let $\xi^\ast$ be a Killing vector field and X be a null vector at N in AdS$_n$ such that 
	 	$|X|^2=0,\ \eta(X,\xi N)=0$ and is not eigenvector
	 	of $\xi$.
	 	For a null geodesic $x(\lambda)$ which pass through $N$ and has tangent vector $X$, 
	 	the followings are equivalent.
	 	
	 	(1)\qquad
	 	$x(\lambda)$ intersects with hypersurface $|\xi^\ast|^2=0$.
	 	
	 	(2)\qquad
	 	$
	 	H(\xi N,\xi X):=|\xi N|^2|\xi X|^2-\eta(\xi N,\xi X)^2=0.
	 	$
	 	
	 	(3)\qquad
	 	$X$, $\xi X$, $N$ are not linearly independent.  
	 	\label{main prop}
	 \end{prop}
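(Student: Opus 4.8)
The plan is to reduce the whole proposition to the discriminant of a single quadratic. By Proposition~\ref{geodesic of AdS} and \eqref{null equation} the horizontal null geodesic is the null line $x(\lambda)=N+\lambda X$, and since $\xi$ acts as a linear operator,
\begin{align}
f(\lambda)=|\xi^\ast(x(\lambda))|^2
&=\eta(\xi N+\lambda\,\xi X,\ \xi N+\lambda\,\xi X)\\
&=|\xi X|^2\lambda^2+2\,\eta(\xi N,\xi X)\,\lambda+|\xi N|^2 .
\end{align}
Statement (1) is precisely that $f$ has a finite real zero, so everything is controlled by the discriminant of $f$, which equals $-4H(\xi N,\xi X)$. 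I would first record the standing relations $|N|^2=-1$, $|X|^2=0$, $\eta(N,X)=0$ and the horizontal condition $\eta(X,\xi N)=0$.

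The main step---and the one I expect to carry all the difficulty---is to determine the causal character of $\xi N$ and $\xi X$. Feeding the four relations above into the infinitesimal isometry identity \eqref{linear killing equation} gives $\eta(\xi N,N)=\eta(\xi N,X)=\eta(\xi X,N)=\eta(\xi X,X)=0$, so that both $\xi N$ and $\xi X$ lie in the orthogonal complement $W^{\perp}$ of the plane $W=\mathrm{span}(N,X)$. Since $W$ carries the degenerate Gram matrix $\mathrm{diag}(-1,0)$ with null line $\mathrm{span}(X)$, the metric induced on $W^{\perp}$ is positive semi-definite with one-dimensional radical $\mathrm{span}(X)$. This is the crux, and from it three facts follow at once: $|\xi N|^2\ge0$ and $|\xi X|^2\ge0$; the Gram determinant $H(\xi N,\xi X)\ge0$ by the Cauchy--Schwarz inequality on a positive semi-definite space (hence the discriminant of $f$ is always $\le0$); and $|\xi X|^2=0$ occurs exactly when $\xi X\in\mathrm{span}(X)$, i.e.\ when $X$ is an eigenvector of $\xi$.

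With this structure the equivalences are short. The non-eigenvector hypothesis rules out $|\xi X|^2=0$, so $f$ is a genuine convex parabola; in the spacelike region $f(0)=|\xi N|^2>0$ (and if $|\xi N|^2=0$ then $\lambda=0$ already meets the singularity, so all statements hold trivially). A convex parabola with positive leading coefficient and $f(0)>0$ vanishes at some finite $\lambda$ iff its discriminant is zero, i.e.\ iff $H(\xi N,\xi X)=0$; this gives (1)$\Leftrightarrow$(2). For (2)$\Leftrightarrow$(3) I would use the equality case of Cauchy--Schwarz in $W^{\perp}$: because the radical is $\mathrm{span}(X)$, the vanishing $H(\xi N,\xi X)=0$ means the projections of $\xi N$ and $\xi X$ to the positive-definite quotient $W^{\perp}/\mathrm{span}(X)$ are proportional, equivalently that $\{\xi N,\xi X,X\}$ is linearly dependent; this is also precisely the condition that $\xi x(\lambda)=\xi N+\lambda\,\xi X$ becomes proportional to the null direction $X$ (hence null) for some $\lambda$, reproducing (1) directly. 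One point I would verify carefully is that the vector entering the dependent triple is $\xi N$ rather than $N$: with $\xi N$ replaced by $N$ the triple $\{N,\xi X,X\}$ is forced to be independent under the non-eigenvector hypothesis, so the statement of (3) should be read with $\xi N$.
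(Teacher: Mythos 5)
Your proof is correct and follows essentially the same route as the paper's: both reduce everything to the quadratic $f(\lambda)=|\xi X|^2\lambda^2+2\eta(\xi N,\xi X)\lambda+|\xi N|^2$ and to the fact that $\xi N,\xi X$ lie in $X^\perp\cap N^\perp$, which is positive semi-definite with radical $\mathrm{span}(X)$ --- you merely make this structure explicit (via Cauchy--Schwarz and its equality case) where the paper leaves it implicit in the assertions that such vectors are non-timelike and that degeneracy of $\mathrm{span}(\xi N,\xi X)$ forces $X$ into that span. Your closing caveat is also right: condition (3) as printed, with $N$, can never hold under the non-eigenvector hypothesis, and the paper's own proof of $(2)\Leftrightarrow(3)$ in fact establishes dependence of the triple $X,\,\xi X,\,\xi N$, so the statement should indeed be read with $\xi N$.
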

	 \begin{proof}
	 	$(1)\Leftrightarrow(2)$
	 	
	 	From eq.\eqref{null equation} we have
	 	\begin{align}
	 	\xi^\ast(x(\lambda))=\xi x(\lambda)=\xi N+\lambda\xi X,
	 	\label{xi_ast}
	 	\end{align}
	 	and it is assumed
	 	\begin{align}
	 	\eta(N,X)=\eta(X,X)=\eta(X,\xi N)=0
	 	\label{main prop assumption}
	 	\end{align}
	 	on N then
	 	\begin{align}
	 	\eta(\xi^\ast(x(\lambda)),X)&=\eta(\xi N,X)+\lambda\eta(\xi X,X)=0,\\
	 	\eta(\xi^\ast(x(\lambda)),N)&=\eta(\xi N,N)+\lambda\eta(\xi X,N)=0,
	 	\end{align}
	 	where we have used eq.\eqref{linear killing equation}.
	 	Therefore
	 	\begin{align}
	 	\xi^\ast(x(\lambda))\in X^\perp\cap N^\perp, 
	 	\end{align}
	 	where $X^\perp$ denotes the orthogonal complement subspace to $X$, and $N^\perp$ denotes 
	 	the same for $N$. 
	 	It means $\xi^\ast(x(\lambda))$ is not timelike hence
	 	\begin{align}
	 	|\xi^\ast(x(\lambda))|^2\geq0.
	 	\label{main prop nontimelike cond}
	 	\end{align}
	 	
	 	And using eq.\eqref{xi_ast} the norm of $\xi^\ast$ on $x(\lambda)$ is given as quadratic of $\lambda$:
	 	\begin{align}
	 	f(\lambda):=|\xi^\ast(x(\lambda))|^2=|\xi X|^2\lambda^2+2\eta(\xi N,\xi X)\lambda+|\xi N|^2,
	 	\label{flambda}
	 	\end{align}
	 	then $f(\lambda)$ has zeros if and only if
	 	\begin{align}
	 	\eta(\xi N,\xi X)^2-|\xi X|^2|\xi N|^2\geq0.
	 	\end{align}
	 	
	 	Together with eq.\eqref{main prop nontimelike cond} we have
	 	\begin{align}
	 	|\xi X|^2|\xi N|^2-\eta(\xi N,\xi X)^2=0.
	 	\label{conditionH}
	 	\end{align}

	 	$(2)\Leftrightarrow(3)$
	 	
	 	If $\xi X,~ \xi N$ are linearly dependent then \eqref{conditionH} holds.
	 	We suppose that \eqref{conditionH} holds for linearly independent vectors $\xi X,\xi N$, 
	 	then the subspace $<\xi N,\xi X>_{{\rm span}}$ is degenerate with respect to the induced metric 
	 	from $\eta$. 
	 	By eqs. \eqref{linear killing equation} and \eqref{main prop assumption} 
	 	then $\xi X\in X^\perp\cap N^\perp$ holds.
	 	Similarly $\xi N\in X^\perp\cap N^\perp$ holds.
	 	Since $X,\xi N,\xi X\in X^\perp\cap N^\perp$ and $<\xi N,\xi X>_{\rm span}$ is degenerate therefore $X\in<\xi N,\xi X>_{\rm span}$.
	 	Therefore, in the both cases $X,\xi X,\xi N$ are not lineary independent.
	 	The inverse is easily verified. 
	 \end{proof}
	 
	 \begin{cor}\label{cor intersection point}
	 	If $X$ is not eigenvector of $\xi$ and $X,\xi X,\xi N$ are not lineary independent, 
	 	then the null curve $x(\lambda)$ intersects with the hypersurface $|\xi^\ast|^2=0$ at the point 
	 	\begin{align}
	 	\lambda_0=-\frac{\eta(\xi N,\xi X)}{|\xi X|^2}.
	 	\label{lambda0 eq}
	 	\end{align}
	 	We define the function $\phi:=|\xi^\ast|^2$ on $E^{(2, n-1)}$ then the sign of $\lambda_0$ is 
	 	the same as that of $-d\phi(X)_N$.
	 \end{cor}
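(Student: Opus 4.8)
The plan is to read off everything from the quadratic $f(\lambda)$ already computed in the proof of Proposition~\ref{main prop}. Recall from \eqref{flambda} that along the null geodesic $x(\lambda)=N+\lambda X$ one has
\begin{align}
f(\lambda)=|\xi X|^2\lambda^2+2\eta(\xi N,\xi X)\lambda+|\xi N|^2,
\end{align}
and that $f(\lambda)\ge 0$ for every $\lambda$ because $\xi^\ast(x(\lambda))\in X^\perp\cap N^\perp$ is never timelike. The hypotheses of the corollary are exactly condition (3) of Proposition~\ref{main prop}, so by the equivalence $(2)\Leftrightarrow(3)$ we have $H(\xi N,\xi X)=|\xi N|^2|\xi X|^2-\eta(\xi N,\xi X)^2=0$, i.e.\ \eqref{conditionH}. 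This says precisely that the discriminant of the quadratic $f$ vanishes, so $f$ has a double root, which by the usual formula sits at $\lambda_0=-\eta(\xi N,\xi X)/|\xi X|^2$, giving \eqref{lambda0 eq}.

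Before writing this root I would first secure that the leading coefficient $|\xi X|^2$ is nonzero, which is where the assumption that $X$ is not an eigenvector of $\xi$ enters and which I expect to be the only delicate point. The argument is: in the proof of Proposition~\ref{main prop} it was shown that $\xi X\in X^\perp\cap N^\perp$. Since $N$ is timelike and $X$ is null with $\eta(N,X)=0$, the subspace $X^\perp\cap N^\perp$, regarded inside the Lorentzian space $N^\perp$, carries a positive semidefinite metric whose radical is $\langle X\rangle_{\rm span}$; hence every vector there is non-timelike and is null only if it is proportional to $X$. Therefore $|\xi X|^2=0$ would force $\xi X\in\langle X\rangle_{\rm span}$, i.e.\ $X$ would be an eigenvector of $\xi$, contradicting the hypothesis. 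Thus $|\xi X|^2>0$ and \eqref{lambda0 eq} is well defined.

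For the sign statement I would compute the directional derivative of $\phi=|\xi^\ast|^2$ along $X$ at $N$ directly from $f$. Since $\phi(N+\lambda X)=\eta(\xi N+\lambda\xi X,\,\xi N+\lambda\xi X)=f(\lambda)$, we get $d\phi(X)_N=f'(0)=2\eta(\xi N,\xi X)$, hence $-d\phi(X)_N=-2\eta(\xi N,\xi X)$. Because $|\xi X|^2>0$, the sign of $\lambda_0=-\eta(\xi N,\xi X)/|\xi X|^2$ equals the sign of $-\eta(\xi N,\xi X)$, which is the sign of $-d\phi(X)_N$. This completes the argument; the genuinely new content beyond Proposition~\ref{main prop} is just the positivity of $|\xi X|^2$ and the one-line identification $d\phi(X)_N=f'(0)$.
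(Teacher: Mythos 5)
Your proposal is correct and takes essentially the same route as the paper: both read the double root $\lambda_0$ off the quadratic \eqref{flambda} (its discriminant vanishes by Proposition~\ref{main prop}) and reduce the sign statement to the identity $d\phi(X)_N=2\eta(\xi N,\xi X)$ together with $|\xi X|^2>0$. The only differences are cosmetic: you get that identity from the one-line observation $\phi(N+\lambda X)=f(\lambda)$, so $d\phi(X)_N=f'(0)$, where the paper instead computes $\xi X=\nabla_X\xi^\ast$ in coordinates and uses $\eta(\nabla_X\xi^\ast,\xi^\ast)=\tfrac{1}{2}\nabla_X|\xi^\ast|^2$; and you spell out why $|\xi X|^2>0$ (the radical of $X^\perp\cap N^\perp$ is spanned by $X$, so a null $\xi X$ would make $X$ an eigenvector), a point the paper asserts without proof.
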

	 \begin{proof}
	 	Equation \eqref{lambda0 eq} is obvious from \eqref{flambda}.
	 	Let $\{x^\mu\}$ be a orthonormal coordinate system of $E^{(2,n-1)}$.
	 	We have 
	 	\begin{align}
	 	\xi X=\xi^\mu_\nu X^\nu \partial_\mu=X^\mu\partial_\mu(\xi^\rho_\nu x^\nu)\partial_\rho
	 	=X^\mu\partial_\mu\xi^{\ast\rho}\partial_\rho=\nabla_X\xi^\ast,
	 	\end{align}
	 	where $\nabla$ is Riemannian connection with respect to $\eta$, then we obtain
	 	\begin{align}
	 	\eta(\xi X,\xi N)=\eta(\nabla_X\xi^\ast,\xi^\ast)_N=\frac{1}{2}\nabla_X|\xi^\ast|_N^2=\frac{1}{2}d\phi(X)_N. 
	 	\end{align}
	 	Since $\xi X$ is spacelike, $|\xi X|^2>0$, then the sign of $\lambda_0$ coincides with that of $-d\phi(X)_N$.
	 \end{proof}
	 Especially in the case of $n=3$, i.e., the case of AdS$_3$, the following fact holds.
	 
	 \begin{cor}\label{AdS3degenerate}
	 	Let $N$ be an arbitrary point of $AdS_3(\subset E^{(2,2)})$.
	 	Let $\xi^\ast$ be a Killing vector field, and $X\in E^{(2,2)}$be a null vector at a point N 
	 	in AdS$_3$.
	 	And assume $\eta(X,\xi N)=0$ and $|\xi^\ast|^2$ has zero points.
	 	Then a null geodesic which pass through an arbitrary N and have tangent vector X which is not eigen vector of $\xi$ has intersection with singularity $|\xi^\ast|^2=0$.
	 \end{cor}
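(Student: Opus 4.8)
The plan is to reduce Corollary~\ref{AdS3degenerate} to condition~(3) of Proposition~\ref{main prop} --- that the vectors $X$, $\xi X$, $\xi N$ are \emph{not} linearly independent --- and then to establish that condition by a dimension count that is peculiar to $E^{(2,2)}$. Since $X$ is assumed not to be an eigenvector of $\xi$, Proposition~\ref{main prop} tells us that as soon as those three vectors are linearly dependent, the null geodesic $x(\lambda)=N+\lambda X$ meets the singularity $|\xi^\ast|^2=0$. Hence the whole task is to prove the linear dependence; the hypothesis that $|\xi^\ast|^2$ has zero points simply guarantees that the singular set is non-empty and, via Corollary~\ref{cor intersection point}, that $\xi X$ is spacelike so that the intersection parameter $\lambda_0$ is genuinely finite.

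First I would collect the orthogonality relations already derived inside the proof of Proposition~\ref{main prop}. Using the infinitesimal isometry identity~\eqref{linear killing equation} together with the hypotheses $|X|^2=0$, $\eta(X,N)=0$ and $\eta(X,\xi N)=0$, one verifies that
\begin{align}
X,\ \xi X,\ \xi N\ \in\ X^\perp\cap N^\perp .
\end{align}
Indeed, antisymmetry gives $\eta(\xi X,X)=-\eta(X,\xi X)=0$ and $\eta(\xi N,N)=0$; the assumption $\eta(X,\xi N)=0$ gives $\eta(\xi X,N)=-\eta(X,\xi N)=0$ and $\eta(\xi N,X)=0$; and $X$ itself lies in $X^\perp\cap N^\perp$ because it is null and orthogonal to $N$.

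Next I would run the dimension count in $E^{(2,2)}$. Because $N$ is timelike, $|N|^2=-1$, while $X$ is null and orthogonal to $N$, the two vectors are linearly independent and span a two-dimensional subspace $\langle N,X\rangle$. In the non-degenerate space $E^{(2,2)}$ one has $\dim W+\dim W^\perp=4$ for \emph{every} subspace $W$, so
\begin{align}
\dim\big(X^\perp\cap N^\perp\big)=\dim\langle N,X\rangle^\perp=4-2=2 .
\end{align}
Thus $X$, $\xi X$, $\xi N$ are three vectors confined to a two-dimensional space and are therefore linearly dependent. This is exactly condition~(3) of Proposition~\ref{main prop}, and applying that proposition completes the argument.

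The step I expect to require the most care is the orthogonal-complement identity $\dim\langle N,X\rangle^\perp=2$. The subspace $\langle N,X\rangle$ is \emph{degenerate} --- it contains the null direction $X$, which is orthogonal to both $N$ and itself --- so one cannot argue by restricting $\eta$ to it and splitting off a complement. Instead the formula must be obtained from the fact that $W^\perp$ is the kernel of the surjection $v\mapsto\eta(v,\cdot)\big|_{W}$, where non-degeneracy of $\eta$ on all of $E^{(2,2)}$ is what forces surjectivity. Once this is secured the result is immediate, and it exposes why the phenomenon is special to AdS$_3$: the degeneracy condition of Proposition~\ref{main prop} becomes automatic precisely because the relevant horizontal space $X^\perp\cap N^\perp$ is only two-dimensional, a coincidence that no longer holds in higher-dimensional AdS.
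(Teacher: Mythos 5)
Your proof is correct and takes essentially the same route as the paper: the paper's own argument also places $X,\ \xi X,\ \xi N$ in $X^\perp\cap N^\perp$, notes $\dim\left(X^\perp\cap N^\perp\right)=2$ by non-degeneracy of $\eta$ on $E^{(2,2)}$, concludes linear dependence, and invokes Proposition~\ref{main prop}. Your write-up simply makes explicit the orthogonality checks and the justification of the dimension count (in particular the degeneracy of $\langle N,X\rangle$) that the paper dismisses as ``easily'' seen.
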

	 \begin{proof}
	 	Since $\eta$ is non degenerate metric of $E^{(2,2)}$, $\dim (X^\perp\cap N^\perp)=2$ holds.
	 	And we see $X,\xi X,\xi N\in X^\perp\cap N^\perp$ easily, then $X,\xi X,\xi N$ are linearly dependent.
	 	From Proposition.\ref{main prop} the statement holds.
	 \end{proof}
	 
	 We also consider the case that $X$ is an eigenvector of $\xi$.
	 
	 \begin{prop}\label{constantxi}
	 	Let N be an arbitrary point of $AdS_n(\subset E^{(2,n-1)})$.
	 	Let $\xi^\ast, X\in E^{(2,n-1)}$ be a Killing vector field and a null vector on N in AdS$_n$ respectively.
	 	And let $X$ be an eigenvector of $\xi$ satisfying $\eta(X,\xi N)=0$.
	 	For a null geodesic $x(\lambda)$ which pass through $N$ and has tangent vector $X$,
	 	\begin{align}
	 	f(\lambda)=|\xi^\ast(x(\lambda))|^2=|\xi(x(0))|^2=Const.
	 	\end{align}
	 	holds.
	 \end{prop}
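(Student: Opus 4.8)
The plan is to substitute the eigenvector condition directly into the quadratic form already obtained for $f(\lambda)$ in \eqref{flambda}, and to show that the eigenvector and null hypotheses force both the $\lambda^2$ and $\lambda$ coefficients to vanish, leaving only the constant term.

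First I would recall from \eqref{xi_ast} that along the null geodesic $x(\lambda)=N+\lambda X$ one has $\xi^\ast(x(\lambda))=\xi N+\lambda\,\xi X$, so that, exactly as in \eqref{flambda},
\begin{align}
f(\lambda)=|\xi N+\lambda\,\xi X|^2=|\xi X|^2\lambda^2+2\eta(\xi N,\xi X)\lambda+|\xi N|^2.
\end{align}
This step merely reuses the earlier expansion; note that the antisymmetry \eqref{linear killing equation} of $\xi$ is what makes this a well-defined quadratic in $\lambda$ in the first place.

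Next I would use that $X$ is an eigenvector of the real operator $\xi$ on $E^{(2,n-1)}$, hence $\xi X=cX$ for a real eigenvalue $c$. The leading coefficient is then $|\xi X|^2=c^2|X|^2=0$, since $X$ is null by hypothesis, and the linear coefficient is $\eta(\xi N,\xi X)=c\,\eta(\xi N,X)=0$ by the assumption $\eta(X,\xi N)=0$. Hence $f(\lambda)=|\xi N|^2=|\xi(x(0))|^2$ is independent of $\lambda$, which is exactly the assertion.

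There is no genuine obstacle here; the only point requiring a moment's care is that $X$ is by assumption a genuine vector of the real space $E^{(2,n-1)}$, so the associated eigenvalue $c$ is automatically real and the identities $|\xi X|^2=c^2|X|^2$ and $\eta(\xi N,\xi X)=c\,\eta(\xi N,X)$ require no complexification. It is worth contrasting this with Proposition~\ref{main prop}, whose hypothesis excluded eigenvectors precisely so that $|\xi X|^2>0$ and $f$ was a nondegenerate quadratic; here the eigenvector assumption collapses $f$ to a constant, which means geometrically that the horizontal null geodesic maintains a fixed value of $|\xi^\ast|^2$ and, when that value is positive, never reaches the singularity surface $|\xi^\ast|^2=0$.
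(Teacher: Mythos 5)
Your proof is correct, and while it follows the paper's overall template---substitute the eigenvector relation into the quadratic expansion \eqref{flambda} and show the two non-constant coefficients vanish---it handles the linear coefficient by a genuinely different, more direct argument. You compute $\eta(\xi N,\xi X)=c\,\eta(\xi N,X)=0$ straight from the eigenvector relation $\xi X=cX$ and the hypothesis $\eta(X,\xi N)=0$. The paper instead uses the eigenvector relation only to kill the $\lambda^2$ term, and then deduces $\eta(\xi N,\xi X)=0$ from positivity: it invokes the inequality $|\xi^\ast(x(\lambda))|^2\geq 0$ (i.e.\ \eqref{main prop nontimelike cond}, established inside the proof of Proposition \ref{main prop} from the containment $\xi^\ast(x(\lambda))\in X^\perp\cap N^\perp$) and notes that an affine function of $\lambda$ that is everywhere nonnegative must have vanishing slope. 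Your route is preferable on one point: the paper's appeal to Proposition \ref{main prop} is formally loose, since that proposition's hypotheses exclude eigenvectors of $\xi$; what is really being reused is an intermediate step of its proof that does not need the non-eigenvector hypothesis, and your direct computation sidesteps this issue entirely. Conversely, the paper's positivity argument records the geometric reason the slope must vanish (the norm cannot become negative on $X^\perp\cap N^\perp$), which is a useful observation but not logically necessary here. One small correction to your commentary: the expansion of $f(\lambda)$ needs only bilinearity of $\eta$ and linearity of the matrix $\xi$ acting on $x(\lambda)=N+\lambda X$; the antisymmetry \eqref{linear killing equation} is not what makes it a well-defined quadratic (it is used elsewhere, e.g.\ for the orthogonality relations in the proof of Proposition \ref{main prop}).
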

	 \begin{proof}
	 	We assume $\xi X=\alpha X$, where $\alpha$ is a constant. 
	 	Then $|\xi X|^2=\alpha^2|X|^2=0$.
	 	Therefore, 
	 	\begin{align}
	 	|\xi^\ast(x(\lambda))|^2&=|\xi X|^2\lambda^2+2\eta(\xi N,\xi X)\lambda+|\xi N|^2\\
	 	&=2\eta(\xi N,\xi X)\lambda+|\xi N|^2.
	 	\end{align}
	 	From Proposition \ref{main prop} we have
	 	\begin{align}
	 	|\xi^\ast(x(\lambda))|^2\geq0
	 	\end{align}
	 	then
	 	$\eta(\xi N,\xi X)=0$
	 	and 
	 	\begin{align}
	 	|\xi^\ast(x(\lambda))|^2=|\xi N|^2.
	 	\end{align}
	 \end{proof}
	 
	 Finally we show that the singularity surface in base space is not spacelike.
	 Let $\pi:(\tilde{M},\tilde{g})\rightarrow (M,g)$ be a Riemannian submersion with respective to a Killing vector $\xi$, where $g$ is projective metric by $\xi$.
	 We assume that the function $|\xi|^2:\tilde{M}\rightarrow \mathbb{R}$ dose not have a critical point.
	 Then for an arbitrary point $p$ and its neighburhood $U$ the set $\{x\in\tilde{M};|\xi(x)|^2=|\xi(p)|^2\}$ defines 
	 a hypersurface in $U$ by implicit function theorem. 
	 
	 Let ${\rm S}_\varepsilon$ be the hypersurface $|\xi|^2=\varepsilon$
	 on $\tilde{M}$. 
	 And we assume that there exists a Killing vector $\eta$ which commutes with $\xi$ and is linearly independent. 
	 Indeed, we can take such a Killing vector $\eta$ in our case. 	
	 Since the Killing vector $\eta$ that commutes with $\xi$ is tangents to ${\rm S}_\varepsilon$, 
	 then $\eta':=\pi_\ast\eta$ also tangents to $\pi({\rm S}_\varepsilon)$.
	 Then the norm of $\eta'$ on $\pi({\rm S}_\varepsilon)$ is given by
	 \begin{align}
	 g(\eta',\eta')&=\tilde{g}(\eta,\eta)-\frac{1}{|\xi|^2}\tilde{g}(\xi,\eta)^2\\
	 &=\tilde{g}(\eta,\eta)-\frac{1}{\varepsilon}\tilde{g}(\xi,\eta)^2.
	 \end{align}
	 
	 Now we consider in the region $|\xi|^2\geq0$ in $\tilde{M}$.
	 If $\eta$ is not orthogonal to $\xi$ on ${\rm S}_0$, since $\lim\limits_{\varepsilon\rightarrow+0}g(\eta',\eta')=-\infty$, the Killing vector $\eta'$ is timelike on the singularity surface $\pi(S_\varepsilon)$ in M for a small $\varepsilon$.
	 And if $\eta$ is orthogonal to $\xi$ on ${\rm S}_0$ then $\eta$ is null or spacelike, however $\xi$ and $\eta$ are linearly independent therefore $\eta$ is spacelike on ${\rm S}_0$.
	 Since the subspace which is spaned by $\xi$ and $\eta$ in tangent space of ${\rm S}_0$ degenerate, so we have
	 \begin{align}
	 \tilde{g}(\xi,\xi)\tilde{g}(\eta,\eta)-\tilde{g}(\xi,\eta)^2=0.
	 \end{align}
	 It implies $\lim\limits_{\varepsilon\rightarrow+0}g(\eta',\eta')=0$.
	 
	 In the case of $\tilde{M}={\rm AdS}_3$, since there exists a Killing vector $\eta$ such above, the singularities of base space by $\xi$ are timelike or null curves.
	 
	 \begin{prop}\label{prop killing vector on singularity cureve}
	 	Let $\xi$ be a Killing vector field in AdS$_3$.
	 	We assume that $\xi$ dose not have a fixed point.
	 	Then the singularity curve $|\xi|^2=0$ in norm-wighted orbit space by $\xi$ is timelike or null curve.
	 \end{prop}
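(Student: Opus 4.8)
The plan is to reduce the statement to a computation of the causal character of a single tangent vector. Since AdS$_3$ is three-dimensional and the fibre generated by $\xi$ is one-dimensional, the base space $M=\mathrm{AdS}_3/\xi$ is two-dimensional and the singularity $\pi(\mathrm{S}_0)$, where $\mathrm{S}_0=\{|\xi|^2=0\}$, is a one-dimensional curve. Because $\xi$ is a Killing field it preserves $|\xi|^2$, so $\xi$ is tangent to every level set $\mathrm{S}_\varepsilon$; consequently, if $\eta$ is any Killing field commuting with $\xi$ and independent from it, then $\eta$ is also tangent to $\mathrm{S}_\varepsilon$ and its projection $\eta'=\pi_\ast\eta$ spans the tangent line of $\pi(\mathrm{S}_\varepsilon)$. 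Hence it suffices to determine the sign of $g(\eta',\eta')$ in the limit $\varepsilon\to+0$: the singularity curve is timelike, null, or spacelike according as this limit is negative, zero, or positive. The first thing I would secure is the existence of such an $\eta$; this is guaranteed because $\mathfrak{so}(2,2)$ has rank two, so $\xi$ lies in a two-dimensional abelian subalgebra, and one checks case by case, using the classification of Section \ref{section Classification of Killing vector field of AdS}, that a commuting, linearly independent generator exists with nonvanishing projection.

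The second step is to evaluate $g(\eta',\eta')$ through the Riemannian submersion. Projecting $\eta$ orthogonally off the fibre direction gives
\begin{align}
g(\eta',\eta')=\tilde g(\eta,\eta)-\frac{\tilde g(\xi,\eta)^2}{|\xi|^2},
\end{align}
and multiplying by $\varepsilon=|\xi|^2$ rewrites this as $\varepsilon\,g(\eta',\eta')=D$, where
\begin{align}
D:=\tilde g(\xi,\xi)\,\tilde g(\eta,\eta)-\tilde g(\xi,\eta)^2
\end{align}
is the Gram determinant of the two-plane $\Pi=\langle\xi,\eta\rangle\subset T_p\mathrm{AdS}_3$. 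In Lorentzian signature the sign of $D$ records the causal type of $\Pi$: it is positive for a spacelike plane and non-positive for a timelike or degenerate one. The decisive observation is that on $\mathrm{S}_0$ the plane $\Pi$ contains the null vector $\xi$, so $\Pi$ cannot be spacelike there; thus $D|_{\mathrm{S}_0}\le 0$, with equality precisely when $\Pi$ is degenerate, i.e.\ when $\eta$ is orthogonal to $\xi$ on $\mathrm{S}_0$.

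From here the argument splits. If $\tilde g(\xi,\eta)\ne 0$ somewhere on $\mathrm{S}_0$, then $D\to-\tilde g(\xi,\eta)^2<0$ while $\varepsilon\to+0$, so $g(\eta',\eta')=D/\varepsilon\to-\infty$ and $\eta'$ is timelike: the singularity is a timelike curve. If instead $\eta$ is orthogonal to $\xi$ on $\mathrm{S}_0$, then, being independent of the null vector $\xi$, it must be spacelike there, and $\Pi$ is degenerate along $\mathrm{S}_0$; the limit of $g(\eta',\eta')=D/\varepsilon$ is then of the form $0/0$, and I would resolve it by using that $|\xi|^2$ has no critical point --- a consequence of the no-fixed-point hypothesis that I would establish first --- so that $\varepsilon$ is a smooth defining function for $\mathrm{S}_0$ and the order of vanishing of $D$ can be compared against $\varepsilon$, yielding a non-positive limit and hence a null (non-spacelike) singularity.

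The main obstacle is exactly this degenerate case: controlling the $0/0$ limit of $D/\varepsilon$ rigorously. Case A is clean, requiring only one sign computation. The delicate part is showing that in the orthogonal case the limit cannot become positive, which is where the hypothesis that $\xi$ has no fixed point does the real work: it must be converted into the regularity statement that $|\xi|^2$ has no critical point on $\mathrm{S}_0$, so that the $\mathrm{S}_\varepsilon$ are genuine hypersurfaces and $D$ vanishes to at least first order in $\varepsilon$. I would discharge this either by the implicit-function argument sketched above or, more robustly, by showing directly that $\Pi$ stays non-spacelike, so $D\le 0$, throughout a one-sided neighbourhood $0<\varepsilon$ of $\mathrm{S}_0$, which immediately gives $g(\eta',\eta')\le 0$ and excludes a spacelike singularity.
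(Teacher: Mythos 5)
Your proposal follows the paper's own route almost verbatim: take a Killing field $\eta$ commuting with $\xi$ and independent of it, note that $\eta$ is tangent to the level sets $\mathrm{S}_\varepsilon$ so that $\eta'=\pi_\ast\eta$ is tangent to $\pi(\mathrm{S}_\varepsilon)$, write $g(\eta',\eta')=\tilde{g}(\eta,\eta)-\tilde{g}(\xi,\eta)^2/\varepsilon$, and split according to whether $\tilde{g}(\xi,\eta)$ vanishes at the point of $\mathrm{S}_0$ under consideration. Your non-degenerate case (limit $-\infty$, hence a timelike singularity) is exactly the paper's argument, and the Gram-determinant packaging $\varepsilon\, g(\eta',\eta')=D$ with $D\le 0$ on $\mathrm{S}_0$ is a clean restatement of what the paper does. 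Your justification for the existence of $\eta$ (a two-dimensional abelian subalgebra through $\xi$ in $\mathfrak{so}(2,2)$) is, if anything, more explicit than the paper's bare assertion.

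The genuine gap is in the degenerate case $\tilde{g}(\xi,\eta)|_{\mathrm{S}_0}=0$, and neither of your two proposed repairs closes it. First, the conversion ``no fixed point $\Rightarrow$ $|\xi|^2$ has no critical point on $\mathrm{S}_0$'' is not a formal implication: a fixed point is a critical point, but not conversely. Using the linear-algebra setup of Section 4 one can check that at a degenerate point $P$ (with $\xi P$ null, $\eta P$ spacelike, mutually orthogonal) one has $\xi^2P=\kappa\,\xi P$, and $P$ is critical for $|\xi|^2$ exactly when $\kappa=0$, i.e.\ $\xi^2P=0$ with $\xi P\ne 0$; excluding this requires the nontrivial Jordan structure of $\xi$ at eigenvalue zero to be ruled out, which in practice means running through the classification of Table 1 (types $III^\pm$ do realize $\ker\xi^2\supsetneq\ker\xi$). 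You defer this lemma and never prove it. Second, even granted the defining-function property, your claim that the order-of-vanishing comparison ``yields a non-positive limit'' is unsupported: $D$ vanishes only at the degenerate points of $\mathrm{S}_0$ (where $D=-\tilde{g}(\xi,\eta)^2$ is otherwise negative), not on the whole level set, so a Hadamard factorization $D=\varepsilon h$ is not even available. Your fallback --- that $D\le 0$ throughout a one-sided neighbourhood $\{\varepsilon>0\}$ --- is circular, since for $\varepsilon>0$ the sign of $D$ \emph{is} the sign of $g(\eta',\eta')$; asserting it is asserting the conclusion, and off $\mathrm{S}_0$, where $\xi$ is spacelike, nothing a priori prevents the plane $\langle\xi,\eta\rangle$ from being spacelike. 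In fairness, the paper's own treatment of this case is no better --- it simply declares that $D|_{\mathrm{S}_0}=0$ ``implies'' $\lim_{\varepsilon\to+0}g(\eta',\eta')=0$ --- and you at least identify the $0/0$ indeterminacy and aim only for the non-spacelike conclusion the proposition actually needs; but as a standalone proof, the null case remains unestablished.
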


	 \section{Causality of the base space in AdS$_3$}
	 \label{section Causality in norm-weighted Orbit Space of AdS_3}
	 In this section we investigate the causal structure of base spaces of AdS$_3$ which are specified by the Killing vectors used for the submersion. 
	 We concentrate on the Killing vectors in the types $I_a,I_b,II_a,III^+$ listed in Table\ref{so(2,2)classifytable} 
	 because these type would lead to non-trivial causal structure of base space.

	 In AdS$_3$, from corollary \ref{AdS3degenerate}, almost all null geodesics which are orthogonal to $\xi$ 
	 have intersection with singularity $|\xi|^2=0$.
	 The position of the intersection is given by the use of corollary \ref{cor intersection point}.
	 
	 Let $\{t,s,x,y\}$ and $\eta$ be a orthonormal coordinate system  and the metric of $E^{(2,2)}$, respectively.
	 We use the global coordinate \eqref{AdSglobalmet} for AdS$_3$ in the form 
	 \begin{align}
	 t&=\sqrt{r^2+1}\sin\tau,\\
	 s&=\sqrt{r^2+1}\cos\tau,\\
	 x&=r\cos\theta,\\
	 y&=r\sin\theta.
	 \end{align}
	 
	 Our purpose is the analysis of a null geodesic passing through each point $N(\tau)$ on the curve $r=0$, 
	 \begin{align}
	 N(\tau):={}^T(\sin\tau,\cos\tau,0,0).
	 \end{align}
	 A future directed tangent vector of the null geodesic at $N$ is given by
	 \begin{align}
	 X={}^T(\cos\tau,-\sin\tau,\cos\theta_0,\sin\theta_0)
	 \end{align}
	 where $\theta_0$ is a parameter that specify the direction of $X$. 
	 
	 First, we show that the type $I_b,II_a$ is black hole, and secondly, 
	 the type $I_a,III^+$ have naked singularity and null singularity.
	 
	 \subsection{Type$I_b:\xi=K_{tx}+a K_{sy}$}
	 The representation matrix of $\xi$ is	
	 \begin{align}
	 \xi&=\left(
	 \begin{matrix}
	 0&0&1&0\\
	 0&0&0&a\\
	 1&0&0&0\\
	 0&a&0&0
	 \end{matrix}\right),
	 \end{align}
	 then we have
	 \begin{align}
	 \phi=\frac{1}{2}|\xi^\ast|^2=\frac{1}{2}(-x^2-a^2 y^2+t^2+a^2 s^2).
	 \end{align}
	 If $a=1$, we have $|\xi|^2=2\phi=1$ then there exists no singularity. 
	 If $0<a<1$, 
	 we have 
	 \begin{align}
	 d\phi=-xdx-a^2 ydy+tdt+a^2 sds,
	 \end{align}
	 then
	 \begin{align}
	 d\phi(X)_N=\sin\tau\cos\tau-a^2\sin\tau\cos\tau=\frac{1}{2}(1-a^2)\sin2\tau.
	 \label{I_b dphiX}
	 \end{align}
	 
	 Since $d\phi(X)_N$ is independent on the direction parameter $\theta_0$, eq.\eqref{I_b dphiX} holds for the null vector $X$ which is orthogonal to $\xi^\ast$.
	 As shown in the corollary \ref{cor intersection point}, the sign of $\lambda_0$ is the same that of $-d\phi(N)$, 
	 that is, ${\rm sign}(\lambda_0)= {\rm sign}(-d\phi(N))$. The function $-d\phi(N)$ is shown as a function of $\tau$ 
	 in the figure \ref{typeIblambda0sign}. 
	 
	 \begin{figure}[H]
	 	\centering
	 	\includegraphics[keepaspectratio, scale=0.3]{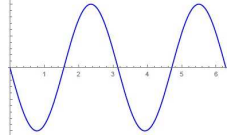}
	 	\caption{Vertical axis is $-d\phi(X)_N$ and horizontal axis is $\tau$.}
	 	\label{typeIblambda0sign}
	 \end{figure}
	 
	 In the region of $\tau$ such that $-d\phi(N)>0$, the null geodesics 
	 intersect with the singularity in the future and are extendible to the past infinity, while in the region of $\tau$ 
	 such that $-d\phi(N)<0$, that intersect with the singularity in the past and are extendible to the future infinity.
	 
	 In the case of $\tau=n\pi/2$, that is $d\phi(X)=\eta(\xi X,\xi N)=0$, we need to clarify the behavior of endpoints 
	 of the null geodesics, which depend on the norm of $\xi^\ast$ at the point N.
	 If $|\xi N|^2\ne0$, then null geodesics dose not intersect with the singularity 
	 since $f(\lambda)=|\xi X|^2\lambda^2+|\xi N|^2\geq|\xi N|^2=f(0)>0$.
	 Therefore the null geodesics are extendible to the future and past infinity.
	 If $|\xi N|^2=0$, the null geodesics intersect with the singularity at $N$ i.e. $\lambda_0=0$.
	 
	 In the case of type $I_b$ we have
	 \begin{align}
	 |\xi N|^2=\phi(N)=\sin^2\tau+a^2\cos^2\tau,
	 \end{align}
	 then we should consider the two cases $a=0$, $|\xi N|$ can be 0, and $a\ne0$, $|\xi N|$ never vanishes, 
	 separately.
	 
	 In the $a=0$ case, 
	 we have $|\xi N|^2=\sin^2\tau$. 
	 For $\tau=(2n+1)\pi/2\ n\in\mathbb{Z}$, we have $f(0)=1\ne0$ then the null geodesics are extendible to 
	 the both future and past infinity. 
	 On the other hand, for $\tau=n\pi\ n\in\mathbb{Z}$, $f(0)=0$ holds, 
	 then the spacetime is bounded in the domain $0<\tau<\pi$. 
	 In $0<\tau<\pi/2$ null geodesics pass through N intersect with the singularity in the past and are extendible to the future null infinity.
	 
	 In $\pi/2<\tau<\pi$ null geodesics passing through N intersect with the singularity in the future 
	 and are extendible to the past null infinity. 
	 Null infinity of AdS$_3/\xi$ is timelike as that of AdS$_3$, hence the causal structure of this case is shown by the conformal diagram \ref{penrosek_tx}.
	 
	 \begin{figure}[H]
	 	\centering
	 	\includegraphics[keepaspectratio, scale=0.4]{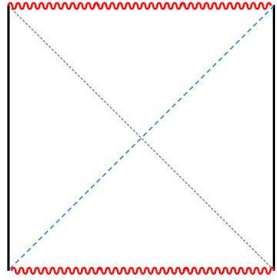}
	 	\caption{Conformal diagram of the case of $a=0$.
	 		The wavy line (red) is the singularity and the dotted line (blue) is the event horizon and the vertical solid line is the null infinity.}
	 	\label{penrosek_tx}
	 \end{figure}

	 In the $a\ne0$ case,	
	 $f(0)=a^2$ holds for $\tau=n\pi,n\in\mathbb{Z}$, and $f(0)=1$ holds for $\tau=(n+\frac{1}{2})\pi,n\in\mathbb{Z}$.
	 The conformal diagram is shown in Fig. \ref{penroseK_tx+aK_sy}. 
	 
	 \begin{figure}[H]
	 	\centering
	 	\includegraphics[keepaspectratio, scale=0.6]{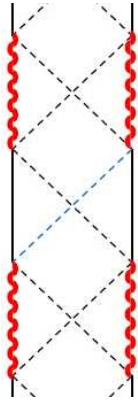}
	 	\caption{Conformal diagram of the case of $a\ne0$.}
	 	\label{penroseK_tx+aK_sy}
	 \end{figure}
	 
	 \subsection{Type$II_a:\ \xi=a(K_{tx}+K_{sy})+L_{ts}-K_{ty}-K_{sx}-L_{xy}$}
	 The representation matrix is
	 \begin{align}
	 \xi=\left(
	 \begin{matrix}
	 0&-1&a&-1\\
	 1&0&-1&a\\
	 a&-1&0&1\\
	 -1&a&-1&0
	 \end{matrix}\right),
	 \end{align}
	 then we have
	 \begin{align}
	 \phi=|\xi|^2=-(-s+ax-y)^2-(t-x+ay)^2+(-s+at+y)^2+(as-t-x)^2.
	 \end{align}
	 And we obtain 
	 \begin{align}
	 d\phi_N=2a(at-2s)dt+2a(as-2t)ds+4atdx-4asdy
	 \end{align}
	 and
	 \begin{align}
	 d\phi(X)|_N=4(\sin(\tau-\theta_0)-a\cos2\tau).
	 \end{align}
	 
	 Since we consider the case that $\xi$ is spacelike, then we consider the region
	 \begin{align}
	 \phi(N)=a(a-2\sin2\tau)>0,\\
	 \Leftrightarrow
	 \frac{a}{2}\geq|\sin2\tau|.
	 \end{align}
	 
	 If $0<a<2$, we take
	 \begin{align}
	 \tau_i&\le\tau\le\tau_f,\\
	 \tau_i&=\pi/2-\arcsin(a/2)/2,\\
	 \tau_f&=\pi+\arcsin(a/2)/2,
	 \end{align}
	 as such a domain and if $a>2$ all $\tau$ are allowed.
	 
	 Orthogonal condition of $\xi^\ast_N(=\xi N)$ and $X$ is given by
	 \begin{align}
	 \eta(\xi^\ast,X)_N=1 + \cos\theta_0(a \sin\tau - \cos\tau) + 
	 \sin\theta_0 (-\sin\tau + a \cos\tau) = 0.
	 \label{IIaorthogonal}
	 \end{align}
	 We draw $-d\phi(X^\pm)|_N$ numerically for $\theta_0=\theta_0^\pm(\tau)$, where $\theta_0=\theta^+_0(\tau),\theta^-_0(\tau)$ are the solutions of equation\eqref{IIaorthogonal}.	
	 We obtain $-d\phi(X^\pm)|_N$ as a function of $\tau$ as shown in the Figure \ref{typeIIa lambda0 sign 0<a<1},\ref{typeIIa lambda0 sign a=1},\ref{typeIIa lambda0 sign a=sqrt2},\ref{typeIIa lambda0 sign 1<a<2},\ref{typeIIa lambda0 sign a>2}.
	 
	 \begin{figure}[H]
	 	\begin{minipage}{0.3\hsize}
	 		\includegraphics[scale=0.3]{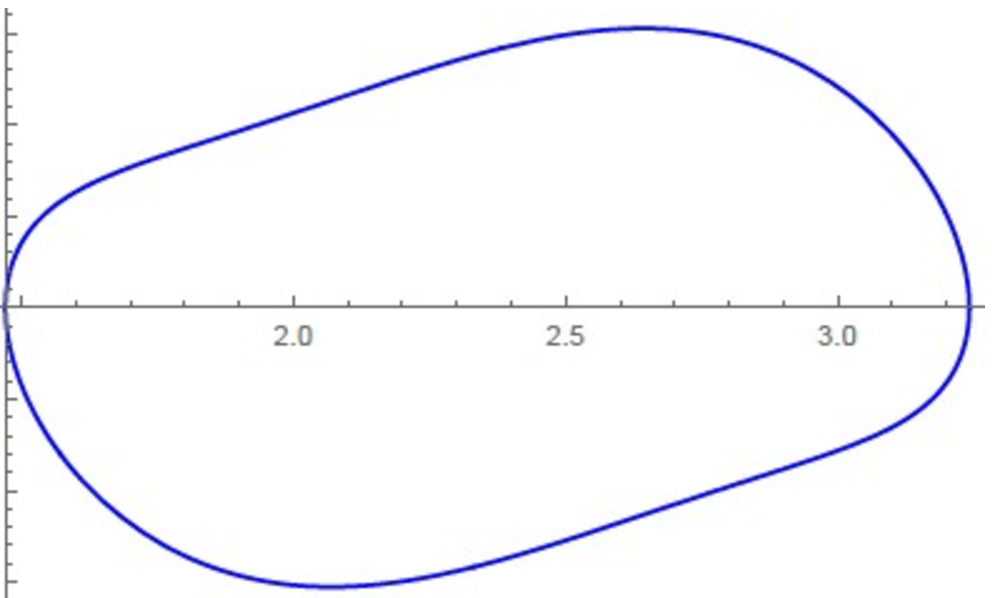}
	 		\caption{Case of $0<a<1$.
	 			Horizontal axis is $\tau$ and vertical axis is $-d\phi(X)|_N$.}
	 		\label{typeIIa lambda0 sign 0<a<1}
	 	\end{minipage}
	 	\begin{minipage}{0.3\hsize}
	 		\includegraphics[scale=0.3]{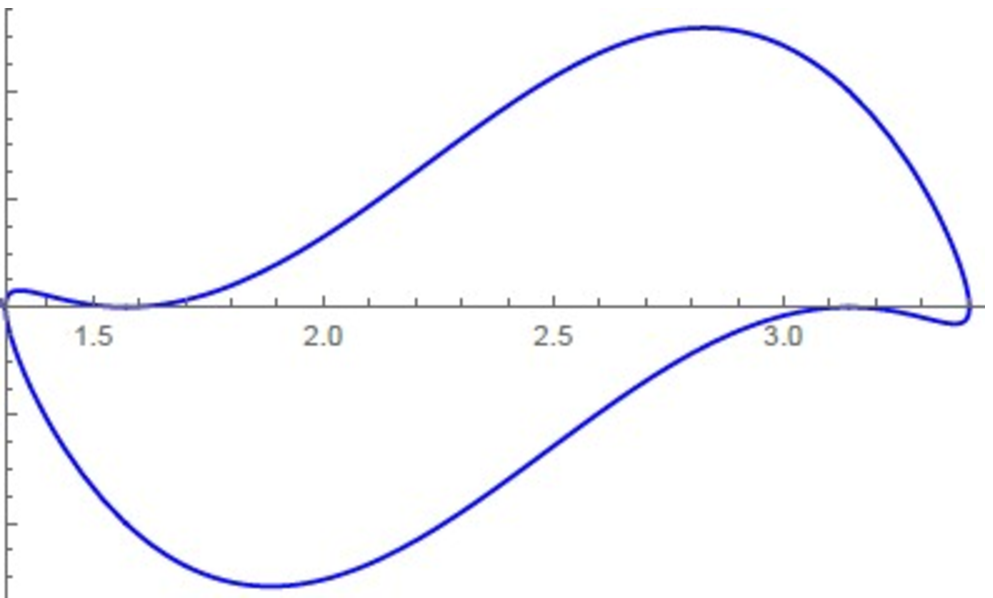}
	 		\caption{Case of $a=1$.}
	 		\label{typeIIa lambda0 sign a=1}
	 	\end{minipage}
	 	\begin{minipage}{0.3\hsize}
	 		\includegraphics[scale=0.3]{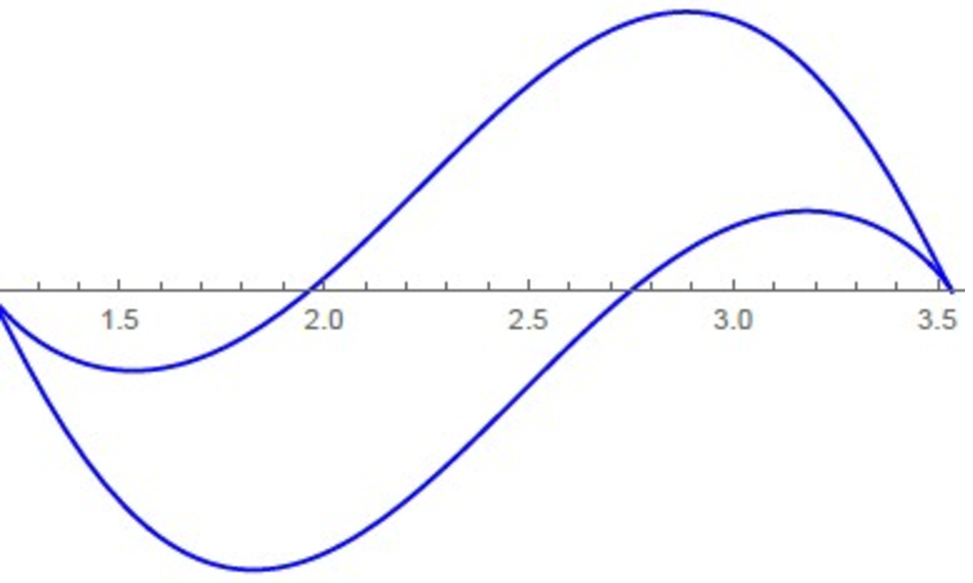}
	 		\caption{Case of $a=\sqrt{2}$.}
	 		\label{typeIIa lambda0 sign a=sqrt2}
	 	\end{minipage}
	 \end{figure}
	 \begin{figure}[H]
	 	\begin{minipage}{0.5\hsize}
	 		\includegraphics[scale=0.5]{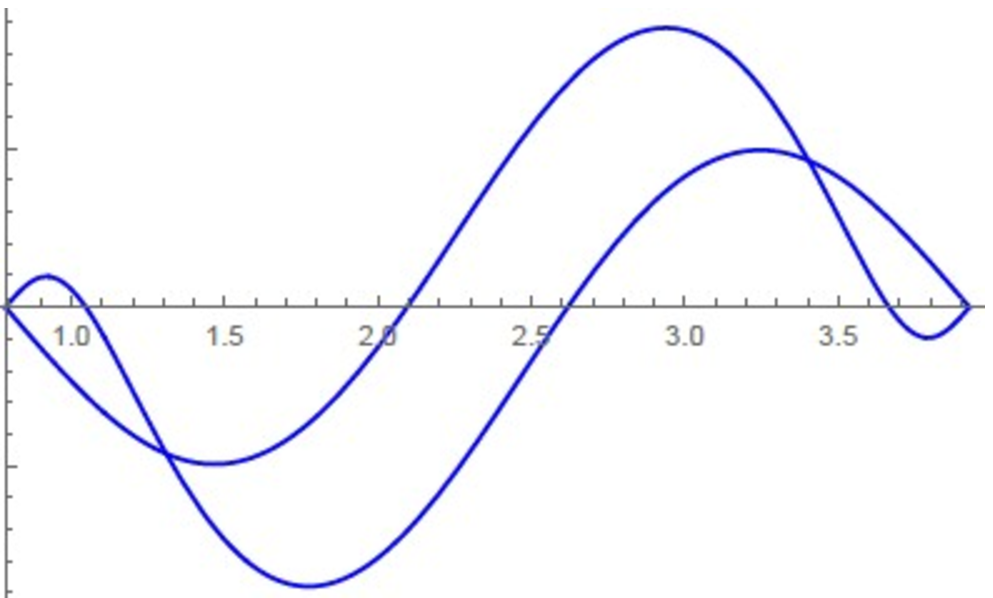}
	 		\caption{Case of $1<a<2\ (a\ne\sqrt{2})$.}
	 		\label{typeIIa lambda0 sign 1<a<2}
	 	\end{minipage}
	 	\begin{minipage}{0.5\hsize}
	 		\includegraphics[scale=0.5]{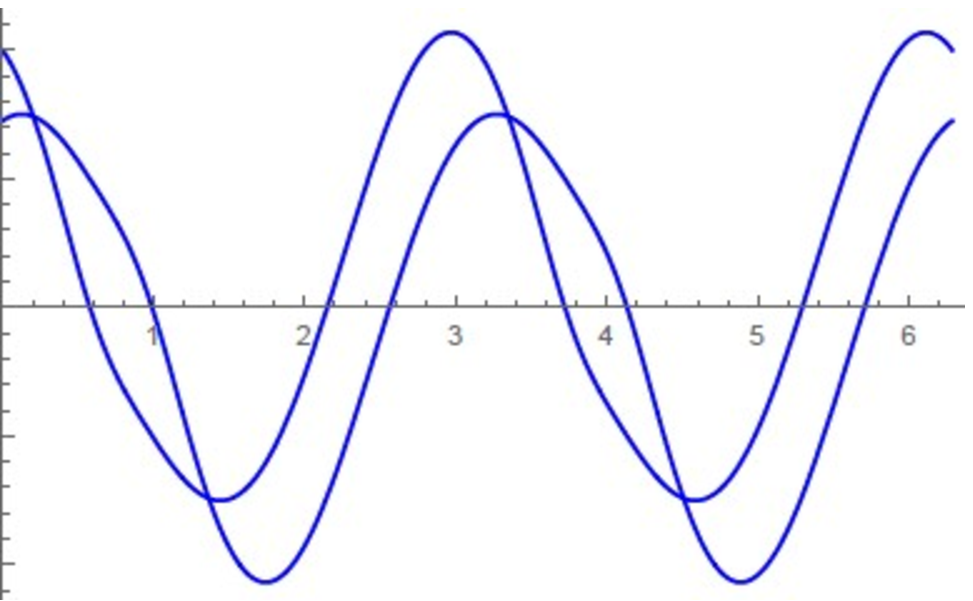}
	 		\caption{Case of $a>2$.}
	 		\label{typeIIa lambda0 sign a>2}
	 	\end{minipage}
	 \end{figure}
	 
	 Then we obtain the conformal diagrams as shown in the following Figures \ref{penroseTypeIIaafrom0to1},\ref{penroseTypeIIaa1},\ref{penroseTypeIIaasqrt2},\ref{penroseTypeIIaafrom1to2},\ref{penroseTypeIIaafrom2to}.
	 
	 \begin{figure}[H]
	 	\begin{minipage}{0.5\hsize}
	 		\centering\includegraphics[scale=0.3]{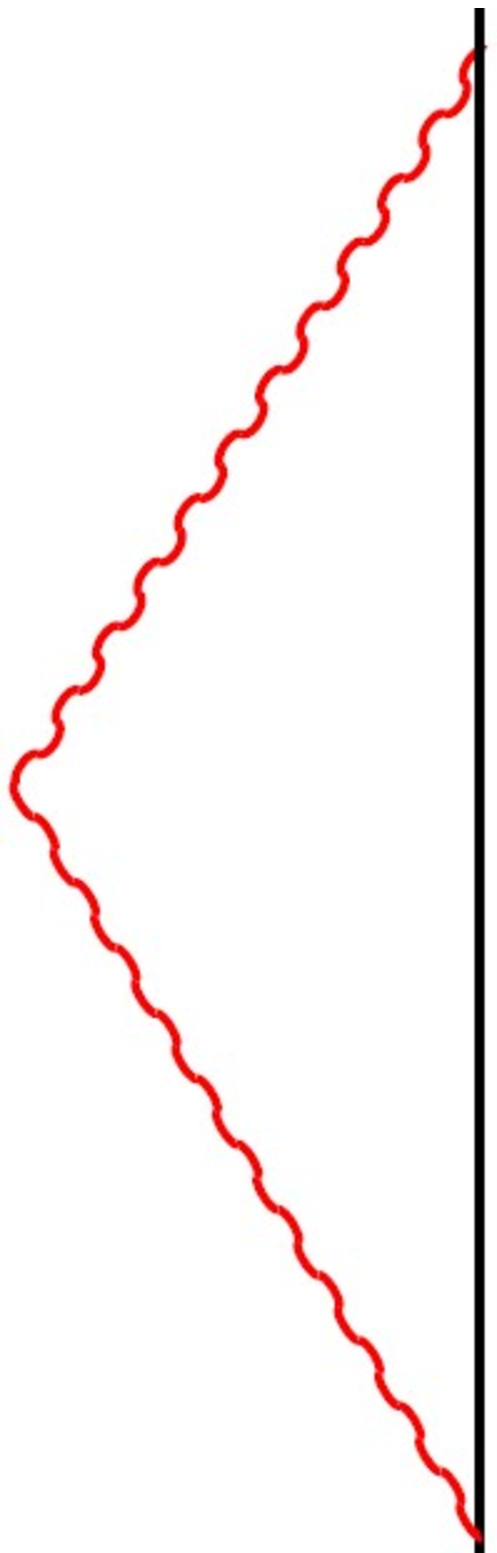}
	 		\caption{$0<a<1$}
	 		\label{penroseTypeIIaafrom0to1}
	 	\end{minipage}
	 	\begin{minipage}{0.5\hsize}
	 		\centering\includegraphics[scale=0.3]{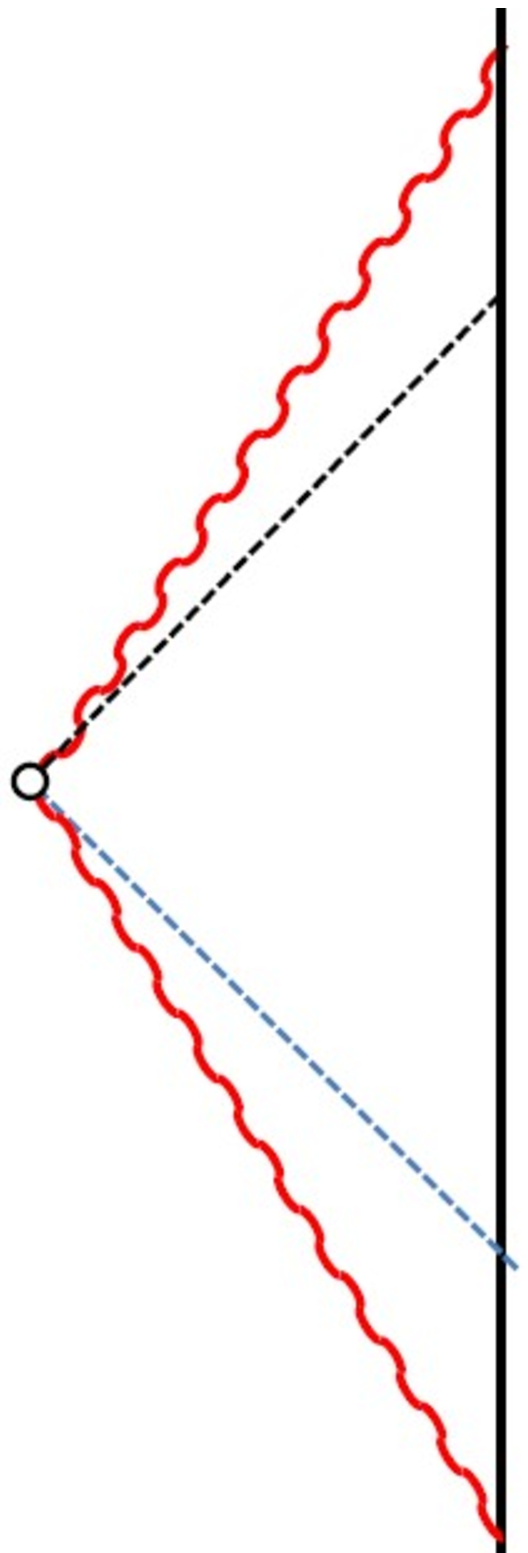}
	 		\caption{$a=1$}
	 		\label{penroseTypeIIaa1}
	 	\end{minipage}
	 \end{figure}
	 \begin{figure}[H]
	 	\begin{minipage}{0.3\hsize}
	 		\centering\includegraphics[scale=0.3]{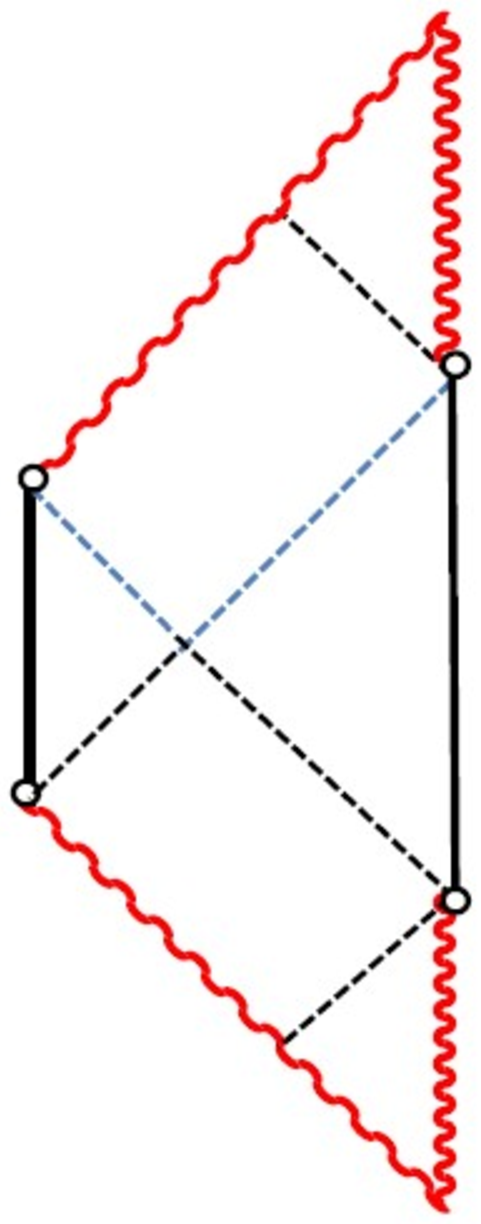}
	 		\caption{$a=\sqrt{2}$}
	 		\label{penroseTypeIIaasqrt2}
	 	\end{minipage}
	 	\begin{minipage}{0.3\hsize}
	 		\centering\includegraphics[scale=0.3]{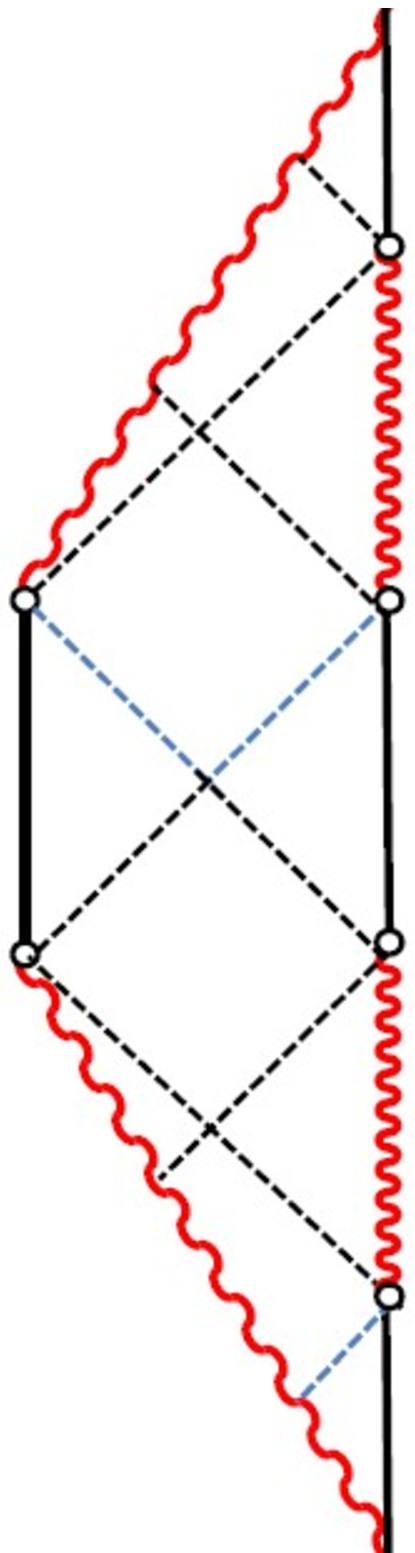}
	 		\caption{$1<a<2,\ (a\ne\sqrt{2})$}
	 		\label{penroseTypeIIaafrom1to2}
	 	\end{minipage}
	 	\begin{minipage}{0.3\hsize}
	 		\centering\includegraphics[scale=0.3]{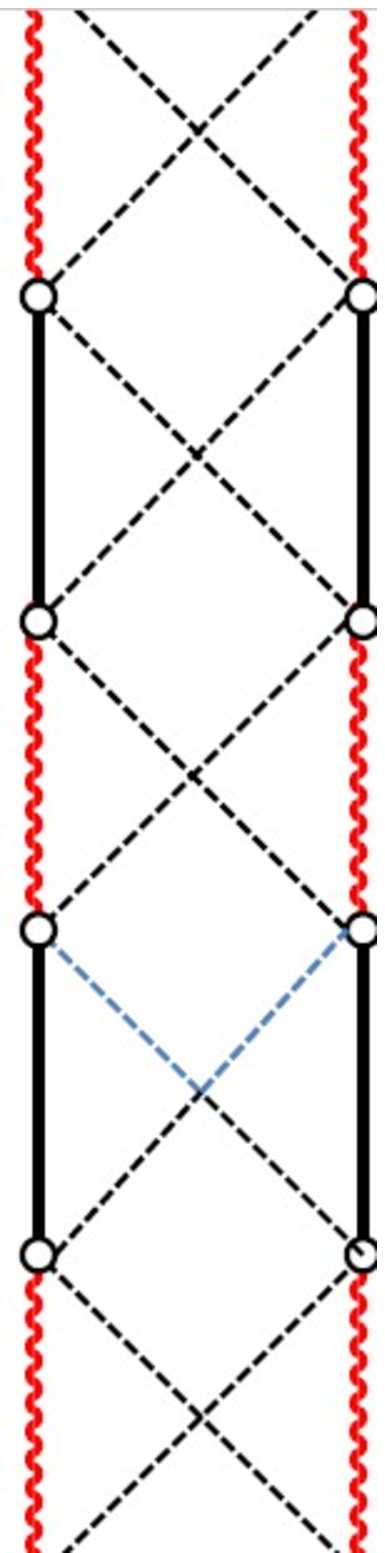}
	 		\caption{$2<a$}
	 		\label{penroseTypeIIaafrom2to}
	 	\end{minipage}
	 \end{figure}
	 In the case of $2<a$ the patterns are repeated vertically like the Reissner-Nordstr\"om black hole. 
	 There exist event horizons and inner horizons. 
	 The causal diagram of this spacetime is the same as type $I_b$ with $a\ne0$ case, 
	 however the local geometries of these two spacetime is different. 
	 In the case of $1<a<2,\ (a\ne\sqrt{2})$  null infinity of right hand side is devided into three pieces by 
	 the singularities. The timelike singularities bound the spacetime in the finite region of the conformal 
	 diagram. The spacetime has event horizons and inner horizons.
	 The case $a=\sqrt{2}$ is limiting case of the previous one 
	 such that the timelike singularities become null singularities.   
	 The $a=1$ case is also the limit of the $1<a<2$ case such that the null infinity of right hand side 
	 shrinks to a point.

	 \subsection{Type$I_a:\ b(L_{ts}-L_{xy})+a(K_{ty}+K_{sx})$}
	 The representation matrix is
	 \begin{align}
	 \xi=\left(\begin{matrix}
	 0&-b&0&a\\
	 b&0&a&0\\
	 0&a&0&b\\
	 a&0&-b&0
	 \end{matrix}\right)
	 \end{align}
	 then 
	 \begin{align}
	 \phi=a^2-b^2-4ab(tx-sy)
	 \end{align}
	 then 
	 \begin{align}
	 d\phi=-4ab(xdt+tdx-yds-sdy),
	 \end{align}
	 hence
	 \begin{align}
	 d\phi(X)|_N=-4ab\sin(\tau-\theta).
	 \label{type Ia dphi}
	 \end{align}
	 
	 We suppose $\phi(N)>0$, that is $\phi(N)=a^2-b^2>0$.
	 
	 The orthogonal condition of $\xi N$ and $X$ is
	 \begin{align}
	 \eta(\xi^\ast,X)_N=b+a\cos(\tau-\theta)=0.
	 \label{type Ia orthogonal condi}
	 \end{align}
	 
	 Since eq\eqref{type Ia orthogonal condi} and eq\eqref{type Ia dphi} we have
	 \begin{align}
	 {\rm sign}(\lambda_0^\pm)={\rm sign}(-d\phi(X)_N)={\rm sign}(\mp4b\sqrt{a^2-b^2}).
	 \end{align}
	 The signs of $\lambda^\pm_0$ are fixed for all $\tau$.
	 
	 Therefore the conformal diagram is shown in Figure \ref{Ia}.
	 \begin{figure}[H]
	 	\centering
	 	\includegraphics[keepaspectratio, scale=0.6]{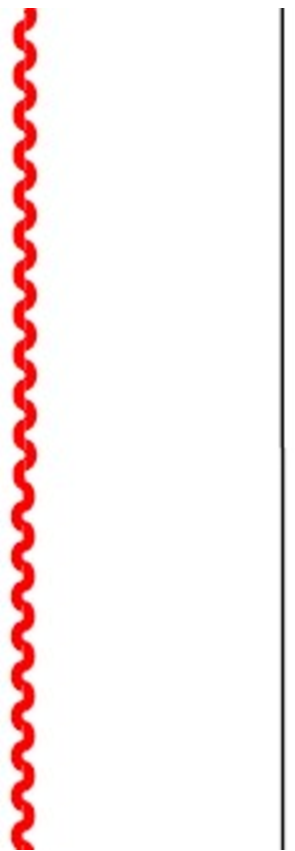}
	 	\caption{Conformal diagram of the Type$I_a$}
	 	\label{Ia}
	 \end{figure}
	 
	 This spacetime dose not have event horizon.
	 
	 \subsection{Type$III^+:\ K_{tx}+L_{xy}$}
	 The representation matrix is
	 \begin{align}
	 \xi=\left(\begin{matrix}
	 0&0&1&0\\
	 0&0&0&0\\
	 1&0&0&-1\\
	 0&0&1&0
	 \end{matrix}\right)
	 \end{align}
	 then 
	 \begin{align}
	 \phi=\frac{1}{2}(t-y)^2
	 \end{align}
	 and 
	 \begin{align}
	 d\phi=(t-y)(dt-dy).
	 \end{align}
	 So we obtain
	 \begin{align}
	 d\phi(X)|_N=\sin\tau(\cos\tau-\sin\theta_0).
	 \end{align}
	 
	 And since 
	 \begin{align}
	 \phi(N)=\sin^2\tau
	 \end{align}
	 therefore we may consider in $0<\tau<\pi$.
	 
	 The orthogonal condition of $\xi^\ast_N(=\xi N)$ and $X$ is
	 \begin{align}
	 \cos\theta_0\sin\tau=0\Leftrightarrow \theta_0^\pm=\pm\pi/2\nonumber,
	 \end{align}
	 then ${\rm sign}(\lambda_0^\pm)= {\rm sign}(-d\phi(X^\pm)_N)={\rm sign}(\sin\tau(\cos\tau\pm1))$ holds and graph is follows.
	 
	 \begin{figure}[H]
	 	\centering
	 	\includegraphics[keepaspectratio, scale=0.3]{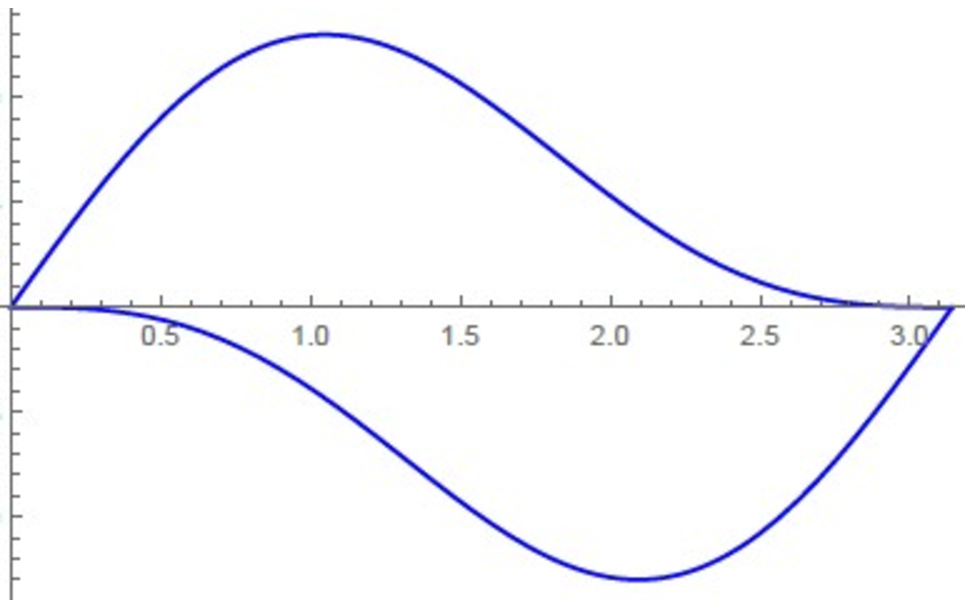}
	 	\caption{}
	 	\label{}
	 \end{figure}
	 
	 And since
	 \begin{align}
	 |\xi X|^2=(\cos\tau-\sin\theta)^2,
	 \end{align}
	 then
	 \begin{align}
	 |\xi X^\pm|^2=(\cos\tau\mp1)^2.
	 \end{align}

	 In the case of $\tau=0$, the null geodesic directed $\theta_0^+$ is always on the singularity because $|\xi X^+|^2=0,\ |\xi N|^2=0$ then $f^+(\lambda)=|\xi X^+|^2\lambda^2+|\xi N|^2=0$ holds.
	 Therefore the singular is null.
	 Similarly, in the case of $\tau=\pi$, null geodesic directed $\theta_0^-$ is always on the singularity.
	 
	 We obtain the following conformal diagram \ref{penroseK_tx+L_xy}.
	 \begin{figure}[H]
	 	\centering\includegraphics[scale=0.3]{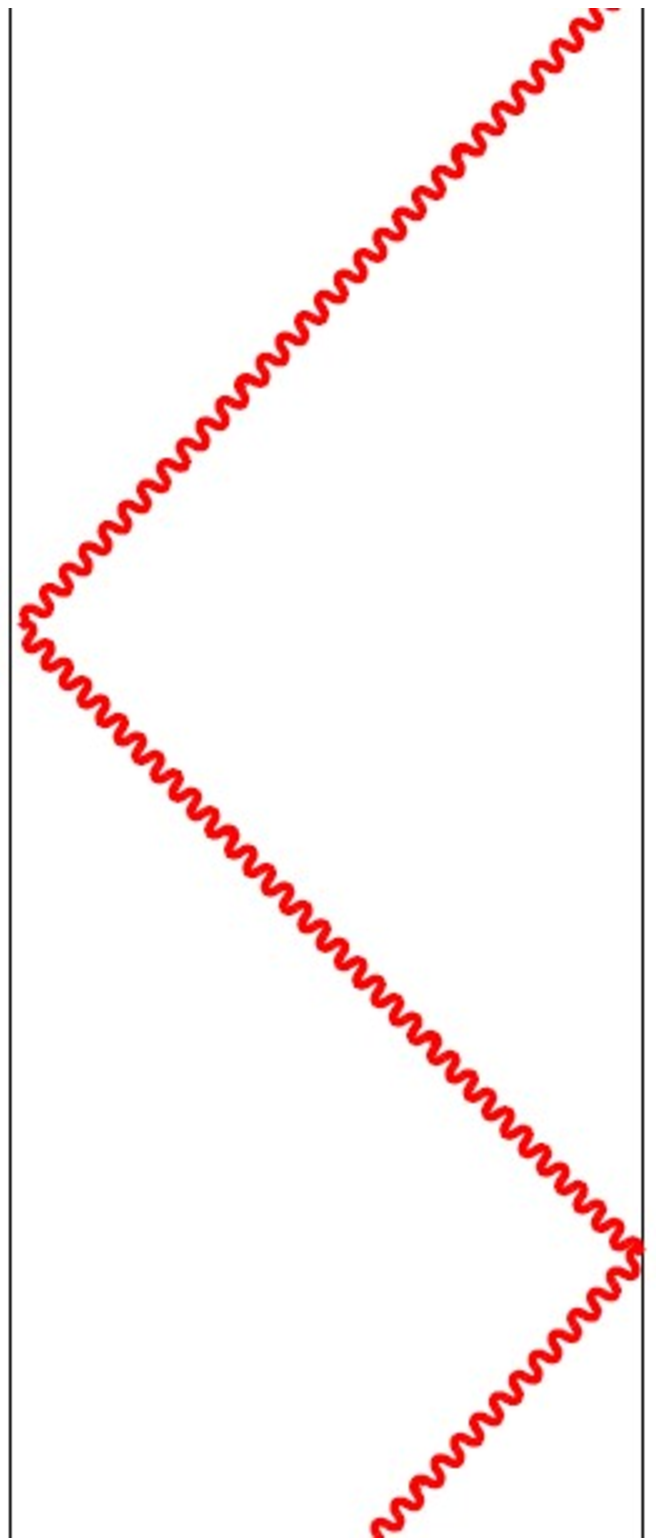}
	 	\caption{}
	 	\label{penroseK_tx+L_xy}
	 \end{figure}
	 
	 \section{Causality of base space in AdS$_5$}
	 \label{AdS_5}
	 In this section we investigate the  base space in AdS$_5$.
	 We choose a simple and interesting Killing vector of AdS$_5$, $\xi=aK_{tx}+bK_{sy}+cL_{zw},\ (a\ne0,b\ne0)$.
	 The conclusion of this section holds in the case of both $c=0$ and $c\ne0$.
	 
	 Let $\{t,s,x,y,z,w\}$ be a coordinate system of six-dimensional pseudo Euclidean space $E^{(2,4)}$. 	
	 We set a coordinate system of AdS$_5$ as
	 \begin{align}
	 t&=\sqrt{r^2+1}\sin\tau,\\
	 s&=\sqrt{r^2+1}\cos\tau,\\
	 x&=r\cos\rho\sin\theta,\\
	 y&=r\sin\rho,\\
	 z&=r\cos\rho\cos\theta\cos\phi,\\
	 w&=r\cos\rho\cos\theta\sin\phi.
	 \end{align}

	 Let
	 \begin{align}
	 N={}^T(\sin\tau,\cos\tau,0,0,0,0), 
	 \end{align}
	 then a null vector on $N$ is given by
	 \begin{align}
	 X={}^T(\cos\tau,-\sin\tau,\cos\rho_0\sin\theta_0,\sin\rho_0,\cos\rho_0\cos\theta_0\cos\phi_0,\cos\rho_0\cos\theta_0\sin\phi_0)
	 \label{XAdS5}
	 \end{align}
	 where $\theta_0,\phi_0,\rho_0$ are direction parameters.
	 
	 The representation matrix of $\xi$ is
	 \begin{align}
	 \xi=\left(
	 \begin{matrix}
	 0&0&a&0&0&0\\
	 0&0&0&b&0&0\\
	 a&0&0&0&0&0\\
	 0&b&0&0&0&0\\
	 0&0&0&0&0&-c\\
	 0&0&0&0&c&0
	 \end{matrix}
	 \right)
	 \end{align}
	 
	 We have
	 \begin{align}
	 \xi X&={}^T(a \cos\rho_0 \sin\theta_0, b \sin\rho_0, a \cos\tau, -b \sin\tau, -c \cos\rho_0\cos\theta_0\sin\phi_0, c \cos\phi_0\cos\rho_0 \cos\theta_0),\\
	 \xi N&={}^T(0,0,a \sin\tau,b\cos\tau,0,0),
	 \end{align}
	 and the orthogonal condition is given by
	 \begin{align}
	 \eta(\xi N,X)&=a\sin\tau\cos\rho_0\sin\theta_0+b\cos\tau\sin\rho_0=0.
	 \label{AdS_5rho_0}
	 \end{align}
	 
	 We want to search for null geodesics which intersect with the singularity, so we will find $X$ such that $H(\xi X,\xi N)=0$ from proposition\ref{main prop}.
	 
	 After some calculation we obtain 
	 \begin{align}
	 H(\xi X,\xi N):&=\frac{b^2g(\tau)\cos^2\theta_0}{2(b^2+a^2\sin^2\theta_0\tan^2\tau)},\label{AdS_5H}
	 \end{align}
	 where $g(\tau)=b^2c^2+a^2(2b^2+c^2)-(a^2-b^2)c^2\cos 2\tau$.
	 
	 The remarkable difference that $H$ vanishes identically in AdS$_3$ while in AdS$_5$ H depends on $\tau$ and direction parameters.
	 Hence we will find zero point of H to find a intersection with the singularity.
	 
	 From \eqref{AdS_5H}, $H=0$ requires $g(\tau)=0$ or $\theta_0=\pi/2$.
	 If $g(\tau_0)=0$ then for $a\ne0,b\ne0,c\ne0$
	 \begin{align}
	 |\cos 2\tau|=\left|\frac{b^2c^2+a^2(2b^2+c^2)}{(a^2-b^2)c^2}\right|>1\nonumber
	 \end{align}
	 holds.
	 Therefore there exists no $\tau_0$ such that $g(\tau_0)=0$.
	 So it implies that $H=0$ if and only if $\theta_0=\pi/2$.
	 
	 The norm of $\xi$ is given by
	 \begin{align}
	 \phi=\frac{1}{2}|\xi|^2&=\frac{1}{2}(-a^2x^2-b^2y^2+a^2t^2+b^2s^2+c^2z^2+c^2w^2),
	 \end{align}
	 then we have
	 \begin{align}
	 d\phi&=-a^2xdx-b^2ydy+a^2tdt+b^2sds+c^2zdz+c^2wdw,
	 \end{align}
	 therefore
	 \begin{align}
	 {\rm sign}(\lambda_0)={\rm sign}(-d\phi(X)|_{N})={\rm sign}(-(a^2-b^2)\sin2\tau).
	 \end{align} 
	 
	 The change of the sign of $\lambda_0$ as increasing $\tau$ for $\theta_0=\pi/2$ is shown in the figure \ref{typeIblambda0sign}.
	 
	 In the case of $\tau=\frac{1}{2}(2n+1)\pi$ then $f(\lambda)=a^2$ and in the case of $\tau=n\pi$ then $f(\lambda)=b^2$ holds.
	 This is the same as the case of $a\ne0$ in type $I_b$.
	 Therefore the conformal diagram is given in figure \ref{penroseaK_tx+bK_sy+cL_zwabcne0}.
	 
	 \begin{figure}[H]
	 	\centering\includegraphics[scale=0.4]{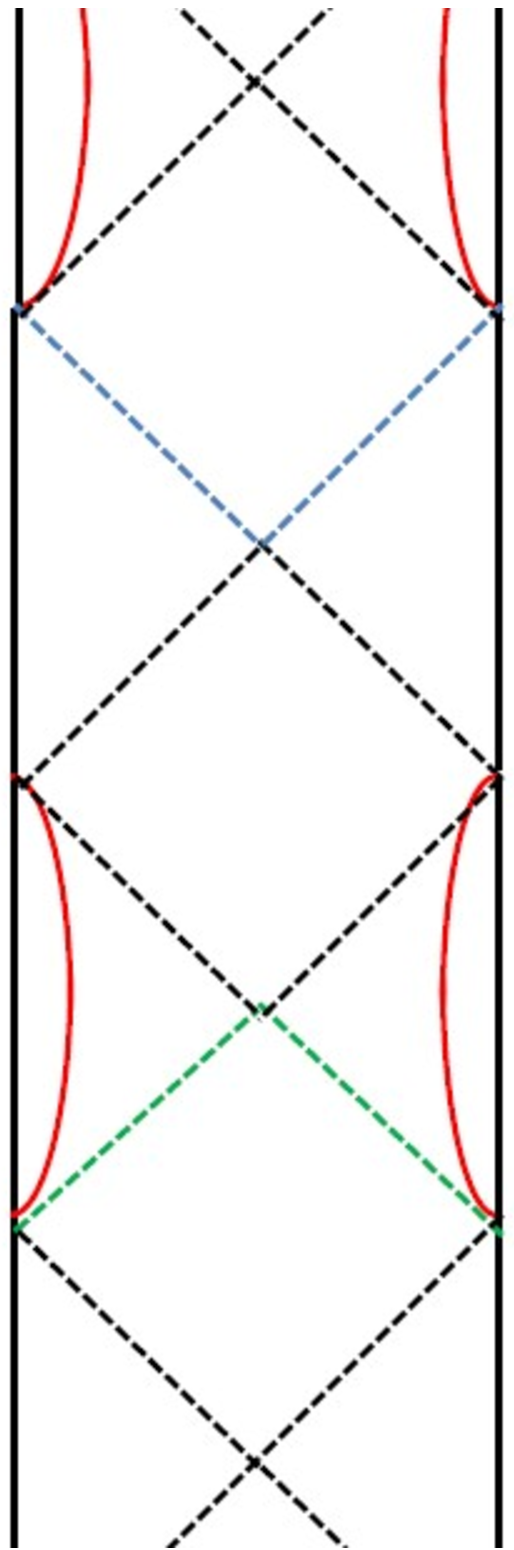}
	 	\caption{}
	 	\label{penroseaK_tx+bK_sy+cL_zwabcne0}
	 \end{figure}
	 
	 In the Figure\ref{penroseaK_tx+bK_sy+cL_zwabcne0}, the wavy lines (red) are singularity and the dotted lines (blue and green) are outer horizon and inner horizon respectively if we restrict the spacetime $\theta_0=\pi/2$.
	 For all null geodesics directed $\theta_0\ne\pi/2$, we have $H\ne0$, that is, the geodesics are extendible to the both of the future and the past null infinity. 
	 And since in this case null infinity is timelike and connected, then there exists no event horizon in the whole spacetime.
	 Therefore it is not a black hole.
	 However, if we restrict the spacetime to the section of $\theta_0=\pi/2$ then the two-dimensional submanifold has black hole structure.
	 This spacetime represents that two-dimensional black hole is embedded in the four-dimensional spacetime.
	 
	 \section{Conclusion and discussion}\label{section Conclusion and discussion}
	 
	 The geometrical structure and the causal structur of the basespaces of Riemannian submersion of AdS 
	 have a rich variety  even if AdS has simple structure. 
	 We have shown explicitely that the causal structure of all base space of Riemannian submersion 
	 by one-dimensional isometry group of AdS$_3$ by using the fact that AdS is embedded in a  flat space. 
	 
	 The bases paces which have black hole structure in AdS$_3$ are induced by two types of Killing vectors: 
	 type $I_b$, $\xi=K_{tx}+aK_{sy}$, and type $II_a$, $\xi=a(K_{tx}+K_{sy})+L_{ts}-K_{ty}-K_{sx}-L_{xy}$.
	 In the case of type $I_b$, 
	 we obtain static black hole for $a=0$, while we obtain Reissner-Nordstr\"om like black hole for $a\ne0$. 
	 In the case of type $II_a$,
	 if $a<1$ no black hole structure appears, and if $1\leq a$ there exist four kind of black holes. 
	 There are a variety of causal structures that express the black holes.

	 In the case of AdS$_5$ we have observed one type of static black hole. 
	 However, there exists an interesting four-dimensional spacetime which has two-dimensional black hole 
	 as a submanifold.
	 Furthermore, the static black hole appears in general higher dimension as shown 
	 in Appendix \ref{Causality in norm-weighted Orbit Space of AdS_n}.
	 
	 Finally, we found that if the base space has an event horizon or an inner horizon then $|\xi|^2$ is constant on the horizon, where $\xi$ is a Killing vector field used in submersion. 
	 It means that the null vector that generates the horizon is an eigen vector of $\xi$ on the horizon.
	 Furthermore, the horizon is the Killing horizon	because there exists a Killing vector which commutes with $\xi^\ast$ tangents to the horizon (see Appendix \ref{Killing horizon of norm-weighted orbit space}).
	 
	 We could clarify the causal structure of the base space in the case of dimension of isometry group is one by using the algebraic method.
	 It is interesting that we generalize the case of over two-dimensional isometry group.
	 We will report this issue in the future.
	 
	 \section*{Acknowledgements}
	 The authors thank K. Nakao, M.Morisawa and T. Houri for useful
	 discussions. This work was supported by JSPS KAKENHI Grant Number JP16K05358 (HI).
	 
	 \section*{Appendix}
	 \appendix
	 
	 \section{A brief review of the Kaluza-Klein theory and the Einstein-Maxwell-Dilaton theory}\label{A brief review of the Kaluza-Klein theory and the Einstein-Maxwell-Dilaton theory}

	 Let $(\tilde{M},\tilde{g})$ be an $n+1$ dimensional Lorentzian manifold, and $\xi$ be a Killing vector field generate a one dimensional isometry group.
	 Assume that $\{x^1,\cdots,x^n,x^{n+1}=y\}$ is coordinate system of $\tilde{M}$ and $\partial_{y}=\xi$.
	 Then we decompose $\tilde{g}$ into
	 \begin{align}
	 \tilde{g}&=g_{ij}dx^idx^j+e^{2\phi}(dy+A_\mu dx^\mu)^2, \quad(1\le i,j\le n,1\le\mu\le n+1),
	 \end{align}
	 where
	 \begin{align}
	 e^{2\phi}=|\xi|^2,\ A_\mu=\tilde{g}_{\mu\nu}\xi^\nu.
	 \end{align}
	 
	 Einstein-Hilbelt's action in $(\tilde{M},\tilde{g})$ is decomposed as
	 \begin{align}
	 S&=\int d^{n+1}\boldsymbol{x}\sqrt{-|\tilde{g}|}\tilde{R}\\
	 &=\int d^n\boldsymbol{x}dy\sqrt{-|g|}e^\phi(R-\frac{1}{4}e^{2\phi} F^2-e^{-\phi}\Delta e^\phi).
	 \label{reduction action}
	 \end{align}

	 Moreover we apply the conformal transformation to $g$
	 \begin{align}
	 \hat{g}_{ij}=e^{\alpha\phi}g_{ij},
	 \label{einstein frame metric}
	 \end{align}
	 for a suitable constant $\alpha$.

	 If we tune $\alpha$ as
	 \begin{align}
	 -(n-2)\alpha+1=0,
	 \label{einstein frame alpha}
	 \end{align}
	 then the action is given by
	 \begin{align}
	 S=\ const.\int d^{n-1}\boldsymbol{x}\sqrt{-|\hat{g}|}\hat{R}+\ other\ term.
	 \end{align}
	 This action gives theory of Einstein gravity and matters which interact with gravity minimally.
	 This is called the Einstein frame.
	 We call $(M,\hat{g})$ base space.
	 If we apply this procedure to the spacetime with cosmological constant then we obtain the Einstein-Maxwell-Dilaton theory with Liouville potential.
	 
	 \section{Causality of base space in AdS$_n$}\label{Causality in norm-weighted Orbit Space of AdS_n}
	 We show that the specific base space of AdS$_n$ has black hole structure.
	 
	 Let $\{t,s,x^1,x^2,\cdots,x^{n-1}\}$ be a coordinate system of 6 dimensional pseudo Euclidean space $E^{(2,n-1)}$ and its metric is given by $\eta=-dt^2-ds^2+(dx^1)^2+\cdots+(dx^{n-1})^2$ and AdS$_n$ is defined by $-t^2-s^2+(x^1)^2+\cdots+(x^{n-1})^2=-1$.
	 We consider the base space by the Killing vector $\xi=K_{tx^1}$.
	 
	 Let
	 \begin{align}
	 N:={}^t(\sin\tau,\cos\tau,0,\cdots,0)
	 \end{align}
	 then a null vector on $N$ is given by
	 \begin{align}
	 X&=\cos\tau\partial_t-\sin\tau\partial_s+\Omega_{n-1}
	 \label{X in AdSn}
	 \end{align}
	 where $\Omega_{n-1}=\Omega^i\partial_{x^i}$ is a unit vector.
	 
	 And the represent matrix of $\xi$ is given by
	 \begin{align}
	 \xi=\left(\begin{matrix}
	 0&0&1&0&\cdots&0\\
	 0&0&0&0&\cdots&0\\
	 1&0&0&0&\cdots&0\\
	 \cdots\\
	 \cdots\\
	 0&0&0&0&\cdots&0
	 \end{matrix}\right)
	 \end{align}
	 hence
	 \begin{align}
	 \xi X&={}^t(\Omega^1,0,\cos\tau,0,\cdots,0)\\
	 \xi N&={}^t(0,0,\sin\tau,0,\cdots,0)
	 \end{align}
	 
	 Orthogonal condition of $X$ and $\xi$ is
	 \begin{align}
	 \eta(X,\xi)|_N=\eta(X,\sin\tau\partial_{x^1})=\Omega^1\sin\tau=0
	 \end{align}
	 
	 Therefore we obtain for arbitrary $\tau$
	 \begin{align}
	 X\wedge\xi X\wedge\xi N=0
	 \end{align}
	 
	 Next we observe a sign of $-d\phi(X)|_N$, $|\xi X|^2$ and $f(0)=|\xi^\ast|^2_N$, where $\phi=\frac{1}{2}(-(x^1)^2+t^2)$.
	 
	 \begin{align}
	 -d\phi(X)=(x^1dx^1-tdt)(X)
	 \end{align}
	 then
	 \begin{align}
	 -d\phi(X)|_N=-\sin\tau\cos\tau=-\frac{1}{2}\sin2\tau
	 \end{align}
	 
	 Moreover
	 \begin{align}
	 |\xi X|^2&=\cos^2\tau\ (\tau\ne n\pi)\\
	 &=1\ (\tau=n\pi)
	 \end{align}
	 and
	 \begin{align}
	 f(0)=\sin^2\tau
	 \end{align}
	 
	 In the case of $\tau=\frac{\pi}{2}$, $f(\lambda)=1$ so $\lambda_0$ is not defined.
	 
	 Therefore the causal structure of this spacetime is isotropic then the conformal diagram for arbitrary direction is follows.
	 \begin{figure}[H]
	 	\centering
	 	\includegraphics[keepaspectratio, scale=0.25]{penroseK_tx.eps}
	 	\caption{}\label{}
	 \end{figure}
	 
	 \section{Horizon of base space}\label{Killing horizon of norm-weighted orbit space}
	 We suppose the horizon given by $\pi(x(\lambda))$ where $x(\lambda)=N+X\lambda$ is a horizontal null geodesic for a suitable point $N$ and null vector $X$ and $\pi$ is the Riemannian submersion by $\xi$.
	 And let $\xi^\ast$ is a Killing vector used by Riemmanian submersion.
	 In our study it is observed that the null vector X is an eigen vector of $\xi$.
	 Then $|\xi X|^2=\eta(\xi X,\xi X)=0$ holds and by Proposition \ref{constantxi} the norm of $\xi^\ast$ is constant on the null geodesic $x(\lambda)$.
	 Moreover the horizon is Killing horizon as follows.
	 There exists a Killing vector field $\zeta^\ast$ commute with $\xi^\ast$ and $\pi_\ast\zeta^\ast$ along a curve $|\xi|^2=const$.
	 Therefore $\pi_\ast\zeta^\ast$ tangents to the horizon $\pi(x(\lambda))$, that is, the horizon is a Killing horizon.

\end{document}